\theoremstyle{plain}
\newtheorem{definition}{Definition}[section]
\newtheorem{theorem}{Theorem}[section]
\newtheorem{lemma}[theorem]{Lemma}
\newtheorem{corollary}[theorem]{Corollary}
\newcommand{\oomit}[1]{{}}
\def\lit[#1]{#1}
\def\spiral[#1](#2)(#3:#4)(#5:#6)[#7]{
  radius)[revolutions]
\pgfmathsetmacro{\domain}{#4+#7*360}
\pgfmathsetmacro{\growth}{180*(#6-#5)/(pi*(\domain-#3))}
\draw [#1,
       shift={(#2)},
       domain=#3*pi/180:\domain*pi/180,
       variable=\t,
       smooth,
       samples=int(\domain/5)] 
       plot ({\t r}: {#5+\growth*\t-\growth*#3*pi/180})
}
\newcommand{\drawdiamond}[5][1cm]{%
    \begin{tikzpicture}[baseline=(current bounding box.center)]
        \draw (0,0) -- ++(45:#1) node[right] {$#2$} -- ++(135:#1) node[above] {$#3$} -- ++(225:#1) node[left] {$#4$} -- ++(315:#1)
          node[below] {$#5$} -- cycle;
    \end{tikzpicture}
}
\newcommand{\AMSeqn}[1]{\begin{align} #1\end{align}}
\newcommand{\half}{\frac{1}{2}}
\newcommand{\abs}[1]{{\vert #1\vert}}
\newcommand{\nn}{\nonumber}
\newcommand{\lp}{LNP}
\newcommand{\lpfull}{Lamperti-Ney Process}
\title{\boldmath Conformal Dimensions On Causal Random Geometry}
\author{Ryan Barouki,}
\author{Henry Stubbs,}
\author{John Wheater}
\affiliation{Rudolf Peierls Centre for Theoretical Physics, Department of Physics, \\
Parks Road, Oxford, OX1 3PU}
\emailAdd{ryan.barouki@physics.ox.ac.uk}
\abstract{We investigate the interaction between matter and causal dynamical triangulations (CDT) in the context of
  two-dimensional quantum gravity. We focus on the Ising model coupled to CDT, contrasting this with Liouville gravity and the
  relation to the Knizhnik-Polyakov-Zamolodchikov (KPZ) formula. We demonstrate analytically for the quenched model that the
  conformal dimensions of fields on CDT align with those on a fixed lattice. We do this using a combination of lattice
methods and adapting the Duplantier-Sheffield framework to CDT, emphasizing the one-dimensional nature of CDT and its description
via a stochastic differential equation.}
\begin{document}
\maketitle
\flushbottom

\section{Introduction}
Random surfaces have been studied thoroughly in the physics literature due to their connection to string theory (the string world-sheet)
and conformal field theory. These models are also of interest from a statistical physics point of view. For example,
results of various scaling exponents of random walks on a regular $\mathbb{Z}^2$ lattice were first calculated exactly on a random
surface and then mapped to the fixed lattice \cite{duplantier1998}. In this paper we think of the random surface as defining ``two
dimensional quantum gravity''.

When defining quantum gravity, one is free to choose the class of random surfaces over which we sum in a path integral. As we will
explain in detail in section \ref{sec:discrete_models}, one natural choice comes from discretising the Einstein-Hilbert action and
``Wick rotating'' to Euclidean signature. The path integral then becomes a sum over planar maps which approximate a random
surface. We call this class \textit{Liouville random surfaces} for its connection to Liouville gravity which will be explained in
the following sections. Matter coupled to Liouville random surfaces has been studied thoroughly in the context of Liouville
gravity. One important result is the \textit{Knizhnik-Polyakov-Zamolodchikov} (KPZ) formula \cite{Knizhnik:1988ak, David:1988hj,
DISTLER1989509} which relates the scaling dimension $x$ of a conformal field on a flat background to the value $\Delta$ on a Liouville
random surface. The intuition for why scaling dimensions of fields get shifted is that the highly fractal and singular
nature of the underlying geometry (as depicted in figure \ref{fig:brownian_map}) ``dresses'' the matter degrees of freedom. For
example, the Ising model coupled to Liouville random surfaces has been calculated exactly with matrix model techniques
\cite{kazakov_ising_1986,Boulatov:1986sb} and the scaling dimensions agree with those found with
the KPZ formula. 

Liouville random surfaces bear many interesting results. One such result is the Hausdorff dimension $d_H = 4$, a dimension that
probes the large-scale structure of the space. One might consider this pathological since the aim is to define two-dimensional quantum
gravity, where two large dimensions are desirable. These issues first motivated Ambj\o rn and Loll \cite{Ambj_rn_1998} to consider
another class of random surfaces -- those which are globally hyperbolic i.e. contain a time slicing. Time slicing is equivalent
to including the causal structure in the set of allowed surfaces. This program is thus called \textit{causal dynamical
triangulations} (CDT) since it is often implemented as a sum over discrete triangulations. 

In this paper, we are interested in how matter couples to CDT -- we use the Ising model as an example but the main results can be
extended to so-called RSOS, or height models.
Not much is known analytically, but numerical studies suggest that
the scaling dimensions of the Ising fields are not shifted from their classical values.
It was shown using Monte Carlo methods and a high-temperature expansion that the critical exponents of a single Ising model
($c=1/2$) coupled to CDT appear to retain their classical values \cite{Ambj_rn_1999_numerics}. Furthermore, a numerical study of the three-state Potts model ($c=4/5$) found the same behaviour \cite{Ambj_rn_2009}. Perhaps more surprisingly, in a numerical study of 8 Ising
models ($c=4$) coupled to CDT, the scaling dimensions of the matter fields still remained unchanged despite a change in the
gravity sector \cite{Ambj_rn_2000}. It appears that CDT coupled to unitary matter has the universal property that the critical
exponents take their classical values and that this result is robust even beyond $c=1$. Interestingly, it has been found that the scaling exponents \emph{do} shift when coupling hard dimers to CDT in a certain phase of the model \cite{Wheater_2022}. However, the hard dimer model at its critical point has a central charge $c=-22/5$ and so is non-unitary. The model has complex weights
and so cannot be described in terms of a positive definite stochastic process. The results of this paper do not apply to such models.

We provide a series of analytical arguments for why the scaling dimensions of fields on CDT are no different to those on a fixed
lattice for the quenched model. Crucial to our arguments are the topological defect formulation of the Ising model \cite{Aasen_2016} and evasion of the KPZ relation. The most useful formulation of the latter for our purposes is that of
Duplantier-Sheffield (DS) \cite{duplantier2010liouville} whose framework we adapt to CDT 
to show that no analogue of KPZ applies. This follows from the fact that CDT is essentially one-dimensional and can be described in terms of a
stochastic differential equation (SDE) which has continuous sample paths \textit{almost surely}. As  a consequence, KPZ is evaded.

This paper is structured as follows. In section \ref{sec:discrete_models} we outline the essential theoretical background for
models of 2D quantum gravity, including CDT and the DS approach to Liouville gravity. In section \ref{sec:top_ising} we explain
the plaquette formalism of the Ising model as described in \cite{Aasen_2016} and use the algebra of topological defects to find algebraic relations for operators that will be crucial for finding the conformal dimensions: the Dehn twist operators. In section \ref{sec:stochastic} we show that one can indeed construct a Dehn twist in the continuum, from which we calculate the conformal dimensions of the Ising fields. Furthermore, we provide an alternative proof that there will be no KPZ-like relation in continuum CDT by constructing a random measure and following the arguments of DS in \cite{duplantier2010liouville}. In section
\ref{sec:horava} we extend the existing connection to Ho\v rava--Lifshitz gravity originally found in \cite{Ambj_rn_2013}.
Finally, in section \ref{sec:annealed} we provide an argument for why our results may apply more generally to the model where
the geometry and matter are sampled according to a joint measure (the \textit{annealed model}), which relies solely on the
continuity of the process describing the evolution of CDT.
\section{2D quantum gravity}
\label{sec:discrete_models}
In Lorentzian space-time, quantum gravity on the two dimensional manifold $M$ is defined by the path integral
\begin{equation}
    Z(\Lambda) = \int \mathcal{D}[g]\, \mathcal{D}[\Phi]\,e^{iS_{EH}(g)+iS_m(\Phi,g)}\,.\label{eqn:LorentzianPI}
\end{equation}
Here $[g]$ denotes the equivalence class of Lorentzian metrics $g$ on $M$ up to diffeomorphisms, and $[\Phi]$ the configuration of, for the moment unspecified, matter degrees of freedom living on $M$.
$S_{EH}(g)$ is the Einstein-Hilbert action
\begin{equation}
    S_{EH}(g) = \frac{1}{16\pi G_N}\int_M d^2x \sqrt{g}(2\Lambda - R)\,,
\end{equation}
where $G_N$ is Newton's constant, $\Lambda$ is the cosmological constant, and $R$ is the Ricci scalar curvature. $S_m(\Phi,g)$ is the diffeomorphism invariant action for the matter degrees of freedom. 
For simplicity we have omitted the Gibbons-Hawking-York term for manifolds with boundary. We will work with the Euclidean gravity model which is similarly defined by
\begin{equation}
    Z(\Lambda) = \int \mathcal{D}[g]\, \mathcal{D}[\Phi]\,e^{-S_{EH}(g)-S_m(\Phi,g)}\,,\label{eqn:EuclideanPI}
\end{equation}
where now the functional integration is over Euclidean, rather than Lorentzian, metrics. Note, however, that the relationship between Lorentzian and Euclidean models cannot be simply seen as 
the  Wick rotation $t\to -it$, as for field theory in flat space-time, because this transformation does not commute with diffeomorphisms.

In two dimensions, Euclidean manifolds are completely characterised by their genus $g$ and the number of
boundaries $b$. Moreover, the curvature part of the Einstein-Hilbert action is a topological invariant due to the
Gauss-Bonnet theorem
\begin{equation}
  \label{eq:euler_char}
    \int_M d^2x\sqrt{g}R = 4\pi\chi(M),
\end{equation}
where $\chi(M) = 2 - 2g - b$ is the Euler characteristic of the manifold. Hence for a fixed topology, every geometry receives the
same factor (\ref{eq:euler_char}) in the path integral and can be factored out leaving only the cosmological constant term, so we henceforth set $8\pi G_N=1$.  The physics of the model then lies largely in the proper definition of the functional integral over the metric. We will be concerned with two (apparently) different methods for doing this: the first is by the explicit discretization of spacetime; and the second by representation in terms of stochastic processes.

\subsection{Graphs}

Much of this paper is most conveniently expressed in the language of graphs so for convenience we gather our definitions here.
\begin{definition}
\item (i) A simple planar graph $G = (V,E)$ consists of a set of vertices $V$ and a set of edges $E$ that are unordered pairs of distinct vertices $\langle u, v\rangle$, such that no two vertices share more than one edge, and such that $G$ can be embedded in the plane with no two edges intersecting. We denote by $d(u,v)$ the graph distance between vertices $u$ and $v$.

\item  (ii)  A rooted graph $G$ is a graph with one marked vertex called  the \textit{root} $v_0$.

\item (iii) The dual graph $G^*$ is constructed from $G$ by placing a vertex in every face, and joining the new vertices with edges such that each new edge intersects only one edge of $G$.
Essentially, vertices are mapped to faces, edges to edges and faces to vertices.
\item (iv)  A \textit{triangulation} $T$ is a rooted planar graph where every face is a triangle.  
\item (v)  A \textit{tree} is planar graph with no cycles.
  \end{definition}

\subsection{Euclidean triangulations}
One way to implement the gravitational path integral \eqref{eqn:EuclideanPI} is to discretise spacetime and replace the functional integral by a sum over discrete geometries. This
approach provides a UV cut-off in the form of the lattice
spacing and is naturally non-perturbative. Indeed, the use of generalized triangulations to approximate the
geometry of manifolds in general relativity goes back to the work of Regge \cite{Regge1961GeneralRW} on coordinate-independent 
computational methods. 
Here we focus on the approximation of a two-dimensional manifold $M$ and its geometry $g$ by
triangulations composed of  piece-wise flat equilateral triangles each of the same area $\alpha$. Heuristically, a given continuum configuration of fixed area $A$ can approximated progressively more accurately by increasing the number of triangles, $N_\alpha$, used and decreasing $\alpha$  such that 
$N_\alpha \alpha=A$ is held fixed. Additionally, for fixed $N_\alpha$, a given triangulation defines a set of graph distances $\{d (v_i, v_j)\}$ between vertices pairwise that approximates the set of geodesic distances on $M$; hence each distinct triangulation approximates a distinct metric $g$.

We first consider the pure gravitational system with no matter degrees of freedom on a disk $M$.  
Then 
let  $\mathcal E (M)$ be the set of inequivalent, unrestricted triangulations of $M$ and define the partition function
\begin{equation}
W[\tilde g,z]=\sum_{T\in \mathcal E (M)} \tilde g^{\Delta(T)}z^{\abs{\partial T}} \,.\label{eqn:Wdefinition}
\end{equation}
Here 
we denote by $\Delta(T)$ the number of triangles in $T$, and by $\abs{\partial T}$ the number of edges in the boundary of $T$. The combinatorics of  $\mathcal E (M)$  was first elucidated by Tutte in \cite{Tutte1962a, Tutte1962b, Tutte1962c, Tutte1963}. Crucially the number of triangulations $T\in \mathcal E (M)$ such that $\Delta(T)=n$ grows more slowly than $\tilde g_c^{-n}$ with $0<\tilde g_c<1$. It follows that the sum in \eqref{eqn:Wdefinition} converges and $W$ exists for $\tilde g<\tilde g_c$.

It was proposed in \cite{KAZAKOV1985295,DAVID1985543} that the sum over triangulations in \eqref{eqn:Wdefinition}, taken in the limit $\tilde g\uparrow\tilde g_c$ where arbitrarily large triangulations dominate,  can be identified with the Euclidean functional integral over the metric. Setting  $\tilde g \to
\tilde g_c e^{-\Lambda \alpha}$, identifying the area 
\begin{equation}
\label{eqn:vol_discrete}
   \alpha \Delta(T)\to \int_\mathcal{M} d^2x \sqrt{g} \,,
\end{equation}
and the triangulation sum
\begin{equation}
\sum_{T\in \mathcal E (M)} \frac{1}{C(T)}\tilde g_c^{\Delta(T)} (.)\to  \int \mathcal{D}[g] (.)\,,\label{eqn:measureT}
\end{equation}
we see that $W[\tilde g,1]$ formally reproduces $Z[\Lambda]$ \eqref{eqn:EuclideanPI} (the curvature term can be ignored as it is topological).
There is now a great deal of evidence that this identification is correct and that the Euclidean functional integral over the metric can indeed be defined as a sum over triangulations. Much of this evidence comes through the connection between $\mathcal E (M)$ and matrix models which was established in  \cite{Brezin:1977sv}; for a modern review see \cite{Anninos:2020ccj}.

Each $T\in \mathcal E (M)$ is a random planar map. This ensemble  has been studied in great detail; one particularly interesting result is that
the Hausdorff dimension of very large planar maps is $d_H = 4$ almost surely \cite{gall_hausdorff}. Thus such a map  must have very non-trivial fractal structure since it is assembled piecewise from two-dimensional triangles (or, more generally, polygons). Figure \ref{fig:brownian_map} shows an example of a
planar map with a large number of vertices; it is a discrete approximation of the Brownian map \cite{gall_brownian_map}. We will return to the characterization of such maps in section \ref{sec:StochasticLiouville}.
\begin{figure}
  \begin{center}
    \includegraphics[width=0.95\textwidth]{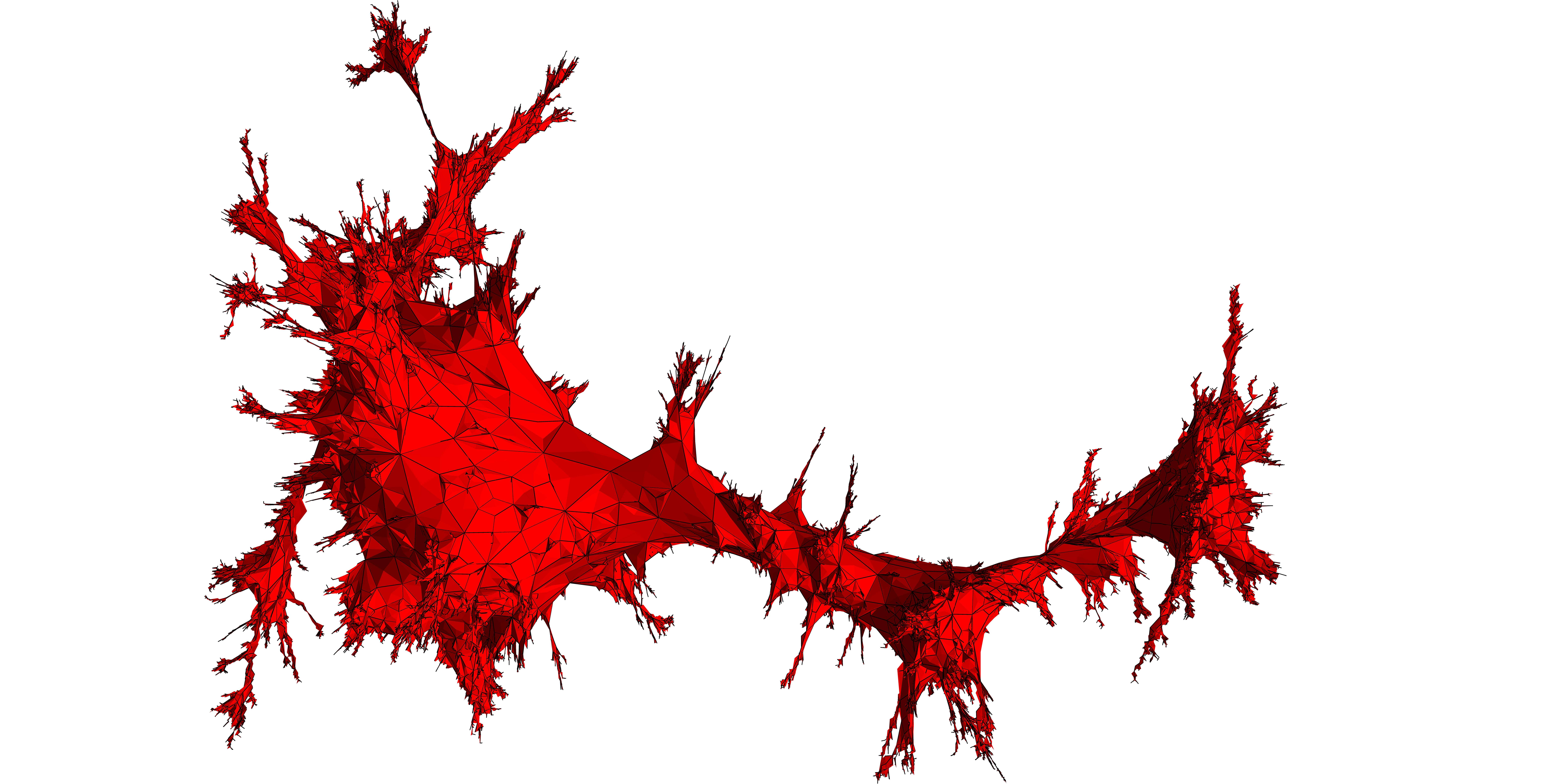}
  \end{center}
  \caption{An instance of a large planar map with 30,000 vertices generated by J\'er\'emie Bettinelli \cite{JerBet}.}
  \label{fig:brownian_map}
\end{figure}

\subsection{Causal triangulations}

The Euclidean triangulations provide a definition of the Euclidean path integral but it is not immediately clear how results in this theory are related to physical quantities in the Lorentzian theory \eqref{eqn:LorentzianPI}. This led Ambj{\o}rn and Loll \cite{Ambj_rn_1998} to introduce the causal  triangulation (CT). The CT  triangulates only manifolds that are globally hyperbolic, i.e. they have a time slicing. Consequently the edges in a CT are either time-like or space-like which allows a well-defined Wick rotation between Euclidean and Lorentzian signatures and the notion of causality is automatically incorporated. The model is most often solved in the Euclidean setting but a consistent way to define Lorentzian triangles and  angles that satisfy the usual Euclidean additivity
conditions exists and was described in \cite{Sorkin_1975}.

\begin{definition}\label{def:CT}
    A \textit{causal triangulation} (CT) $C$ of the disk $M$ is a triangulation where all vertices at a fixed graph distance $t$ from the root form a cycle $S_t(C)$.
    Its height $h(C)$ is the maximum graph distance of the vertices from the root
    $$
       h(C) = \max_{v\in C}\{d(v,v_0)\}.
    $$
\end{definition}
\noindent An example is shown in figure \ref{fig:tree_bijection}.
It is clear from the definition that each vertex of $C$ at height $t>0$ has 2 spatial neighbours, several future neighbours $\sigma_v^f$ and a number
of past neighbours $\sigma_v^p$ such that the total number of neighbours is $\sigma_v = 2 + \sigma_v^f + \sigma_v^p$. The definition is easily extended to annular topology by marking a vertex at height $t=1$ then deleting the root and the edges attached to it. 

Letting  $\mathcal C (M)$ be the set of distinct causal triangulations of $M$ we define the partition functions
\begin{equation}
W[\tilde g,z;t]=\sum_{C\in \mathcal C (M): \,h(C)=t} \abs{S_t(C)}\, \tilde g^{\Delta(C)+1}(z/\tilde g)^{\abs{S_t(C)}} \,.\label{eqn:WCTdefinition}
\end{equation}
Note that, in contrast to the Euclidean triangulation case, we now have a natural `time' $t$ which will play an important role in what follows.  From the gravitational point of view, $W$ is essentially the discretized path integral for the Euclidean amplitude that a universe evolves from a point to a boundary in time $t$. There is a critical value $\tilde g_c=\half$ such that the sum in \eqref{eqn:WCTdefinition} converges for $\tilde g<\tilde g_c$ for all $t$; in the limit $\tilde g\uparrow\tilde g_c$ arbitrarily large triangulations dominate and we recover the path integral for the point-to-boundary amplitude in two-dimensional projectable Horava-Lifshitz gravity \cite{Ambjorn:2013joa}.

Although the sum to determine $W[\tilde g,z;t]$ can be evaluated by elementary means for finite $t$ \cite{Ambj_rn_1998}, this gives us little information about the nature of the CTs that contribute at criticality. It was observed in \cite{tree_bijection} that 
 $\mathcal C(M)$ can also be studied by exploiting the existence of a  bijection $\beta : \mathcal{C} \to \mathcal{T}$  with the set of planar rooted trees, see \cite{durhuus2022trees} for a recent review.
\begin{definition}\label{def:bijection} The bijective map $\beta(C)$ is defined by:
\begin{enumerate}
    \item Mark an edge coming out of $v_0$ and add a root vertex $r$ to $C$ connected by a single edge to $v_0$ placed in the face
      to the right of the marked edge as seen from $v_0$.
    \item Remove all edges in the cycles $S_t(C), \forall t > 0$. i.e. the space-like edges.
    \item For every $v \in S_t(C)$ where $1 \leq t < h(C)$ remove the rightmost edge of the $\sigma_v^f$ edges as seen from
      $v$.
\end{enumerate}
\end{definition}
The resulting graph $T$ is a tree with a new root vertex $r$ and every vertex from $C$ as shown in figure \ref{fig:tree_bijection}.
\begin{figure}
  \begin{center}
    \includegraphics[width=0.9\textwidth]{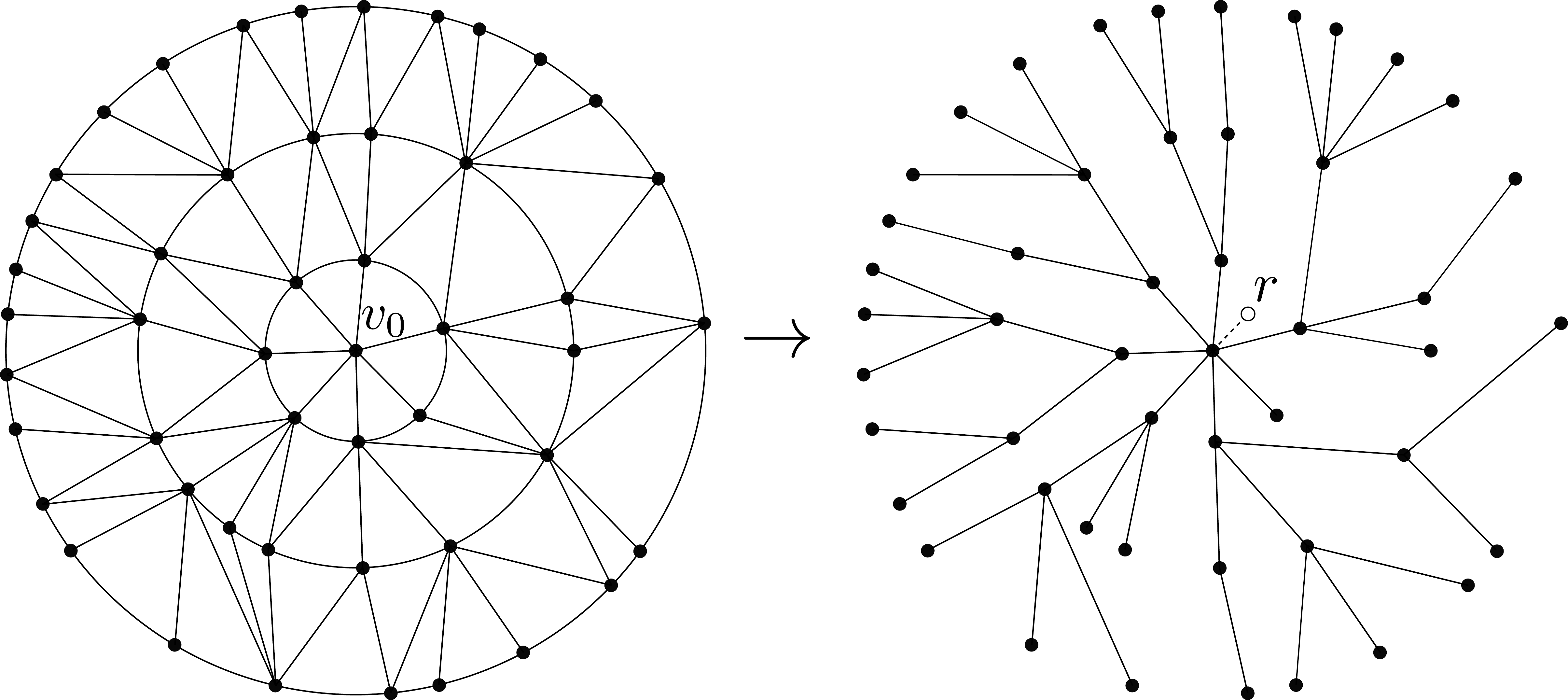}
  \end{center}
  \caption{A causal triangulation of height 3, and  the action of the map $\beta$ to its associated tree.}\label{fig:tree_bijection}
\end{figure}
It is straightforward to show that the inverse map $\beta^{-1}$ exists. The partition functions $W$ can be rewritten in the tree picture as
\begin{equation}
W[\tilde g,z;t]=\sum_{T\in \mathcal T: \,h(T)=t+1}\,\prod_{v\in V(T)\backslash r} \tilde g^{\sigma_v}
(z/\tilde g)^{\abs{S_t(\beta^{-1}(T)}} \,.\label{eqn:WCTtreedefinition}
\end{equation}

It was shown in \cite{Durhuus:2009sm} 
that the ensemble \eqref{eqn:WCTtreedefinition} at criticality, i.e. $\tilde g =\half$, is equivalent to a critical Galton-Watson (GW) process. 
This is a branching process specified by a set of probabilities
$p_n, n=0,1,...$, called the \textit{offspring probabilities}, having unit mean i.e., $\sum_{n=0}^\infty np_n = 1$. 
The process can be viewed as a tree in which $p_n$ is the probability of a single vertex having $n$ offspring.
Recalling that by definition the root vertex has a single offspring with probability one, 
the probability distribution on finite trees $T\in\mathcal T$ is then given by
\begin{equation}
\label{eq:prob_measure}
    \pi(T) = \prod_{i\in T \setminus r} p_{\sigma_i-1}.
\end{equation}
Choosing $p_n=2^{-n-1}$ we see that $\pi(T)$ reproduces the weights in $W[\half,1]$; they define the generic random tree ensemble \cite{Durhuus:2009sm}. It is convenient to introduce the generating function for 
offspring probabilities
\begin{equation}
    f(s) = \sum_{n=0}^\infty p_n s^n.\label{eqn:fsdefinition}
\end{equation}
Then  $f(1) = 1$, $f'(1) = 1$, and for the generic random tree $f(s)=(2-s)^{-1}$.  

The definition of CTs \ref{def:CT} is easily extended to triangulations that have  infinite height  $h(C)=\infty$ and no boundary; these form the set $\mathcal C_\infty$ and cover the plane. 
Applying the bijection $\beta$ \ref{def:bijection}  to $C\in\mathcal C_\infty$ then yields an infinite tree $T\in\mathcal T_\infty$. 
It was shown in \cite{durhuus2022trees} that one can extend the probability distribution $\pi(T)$ on finite trees to a measure $\nu(T)$
on $T\in\mathcal T_\infty$. 
Provided only that $f''(1) < \infty$, $\nu(T)$   is concentrated on single spine trees \cite{durhuus2022trees,
durhuus_spine2003}. 
The spine is a sequence of vertices $r, s_1, s_2, ...$ starting at the root and going off to infinity with no
backtracking. Attached to each spine vertex, with probability $p_n'$ generated by $sf'(s)$, is a set of $n$ finite trees 
with vertices distributed according to $f(s)$.  The ensemble defined by the set of triangulations  $\mathcal C_\infty$ and the probability measure on them $\mu(C)=\nu(\beta^{-1}(C))$   is called the Uniform Infinite Causal Triangulation (UICT). 

\subsection{Coupling matter}
\label{sec:CouplingMatter}

To implement the matter action $S_m$ \eqref{eqn:EuclideanPI} in the discretized picture we introduce new degrees of freedom that live on the vertices of the graphs $G$ in the ensemble. In this paper we work specifically with Ising spins but the construction is easily generalised, in particular to the random height models. 
On each vertex $v\in V(G)$ lives a spin that takes values in $\{+1,-1\}$, and spins on vertices that share an edge $e\in E(G)$ interact with a coupling strength $J_e$.
The spin partition function on $G$ is  given by  the sum over all spin configurations 
\begin{equation}\label{eqn:IsingZ}
    Z_G[\{J\}_G] = \sum_{\{\sigma\in \{+1, -1\}^{V(G)}\}}\exp\left(\sum_{e = \langle x, y \rangle \in E(G)}J_e\sigma_x\sigma_y\right).
\end{equation}
From now on we will assume that for the $\mathcal E$ ensemble $J_e=J, \forall e$, and that for the $\mathcal C$ ensemble $J_e=J_1$ for all space-like edges and $J_e=J_2$, for all time-like edges.
The discretised Euclidean and CT path integrals for the disk corresponding to \eqref{eqn:EuclideanPI} then take the form
\begin{equation}
W_{\mathcal G}=\sum_{G\in\mathcal G} \tilde g^{\Delta(G)} z^{\abs{\partial G}}  Z_G[J_{\mathcal G}]\,,
\end{equation}
where $\mathcal G$ is chosen to be respectively $\mathcal E$ or $\mathcal C$. We refer to the systems described by $W_{\mathcal G}$ as \emph{annealed} models.

In statistical mechanics language $W_{\mathcal G} $ is the grand canonical partition function. For given $J_{\mathcal G}$ the sum is convergent for $\tilde g <\tilde g_c(J_{\mathcal G})$.
In the coupling strength   region $J_{\mathcal G}\in\mathcal A$ where 
 $\tilde g_c(J_{\mathcal G})$ is  analytic, 
 the limit $\tilde g \uparrow \tilde g_c(J_{\mathcal G})$ describes the same purely gravitational physics as the partition function without matter degrees of freedom. On the other hand, in the region $J_{\mathcal G}=J^*_{\mathcal G}\in\partial \mathcal A$ where  $\tilde g_c(J_{\mathcal G})$ is not analytic 
the spins become critical on the very large graphs in the sum  -- they magnetize in the case of the Ising model. The limit $\tilde g \uparrow \tilde g_c(J^*)$ then describes a continuum theory of gravity interacting with matter (central charge $c=\half$ conformal matter in the Ising case). The matrix model solution \cite{kazakov_ising_1986,Boulatov:1986sb} for $W_{\mathcal E}$ shows that the Ising scaling exponents are shifted away from their regular $\mathbb Z^2$ lattice values; they are related to each other by the KPZ formula which can be understood in a number of ways as is discussed  below in section \ref{sec:StochasticLiouville}.
Numerical simulations  and series expansions  for ${\mathcal C}$ \cite{Ambj_rn_1999_numerics}
strongly  suggest that the scaling exponents for Ising spins are not shifted from their regular lattice values, but no exact solution for $W_{\mathcal C}$ is known.

An alternative formulation of the CT interacting with matter is provided by working in the canonical ensemble with the graphs $C\in\mathcal C_\infty$. In principle we have a new measure  $\mu_J(C)$ that reflects the relative weight of the Ising partition function $Z_C(J)$ on different graphs. As critical behaviour of the spins only occurs on very large graphs we might expect $\mu_J(C)$ to capture the same physics as the grand canonical partition function $W_{\mathcal C}$; unfortunately  $\mu_J(C)$  is yet to be constructed. An intermediate step is to analyse the critical properties of an Ising spin system living on a triangulation sampled from the UICT ensemble according to the measure $\mu$, which we  call the \emph{quenched} model. In \cite{tree_bijection} it was shown that at small $J$ this system has a unique Gibbs measure (corresponding to unmagnetized spins), while at large $J$ (at least) two Gibbs measures co-exist (corresponding to two possible magnetised states); furthermore it was shown that almost surely (i.e. with probability one in the measure $\mu$) the critical temperature is the same for any $C$. The main purpose of this paper is to show that \emph{if} this critical point leads to a scaling limit, \emph{then} the corresponding field theory 
must contain operators with the same scaling exponents as those appearing in the scaling limit of the flat lattice Ising model.

\subsection{Liouville gravity \`a la Duplantier-Sheffield}\label{sec:StochasticLiouville}
The Liouville gravity approach to computing the gravitational path integral was introduced by Polyakov \cite{POLYAKOV1981207}. 
We outline the basic features here (see
\cite{Teschner_2001, Nakayama:2004vk} for a detailed review of the topic). 
Any 2D metric can be written as $g=e^{\phi}\hat{g}$, where $\phi$ is a dynamical degree of freedom and
$\hat{g}$ is a fixed background metric which can always be taken to be flat Euclidean space, assuming a suitable topology. The Euclidean Einstein-Hilbert action reduces to the Liouville action for the scalar field
$\phi$ given by
\begin{equation}
    S_L = \int d^2z \sqrt{\hat{g}}(\hat{g}^{ab} \partial_a \phi \partial_b \phi + Q\hat{R}\phi + \Lambda e^{\gamma\phi}) +
    S_{CFT}\,.
\end{equation}
Here $\hat{R}$ is the Ricci curvature associated with $\hat{g}$,  $\Lambda$ is the cosmological constant and $S_{CFT}$ is the
action of some conformal matter with central charge $c$. 
The requirement that there is no conformal anomaly then determines 
$Q = \frac2\gamma + \frac\gamma2$, where $\gamma$ is given by
\begin{equation}
  \label{eq:gamma_c}
    \gamma = \frac{1}{\sqrt{6}}(\sqrt{25-c} - \sqrt{1-c}).
\end{equation}
For example, pure gravity ($c=0$)
corresponds to $\gamma = \sqrt{8/3}$ whereas the scaling limit of the Ising model ($c=\half$) corresponds to $\gamma = \sqrt{3}$. 

The Liouville path integral for pure gravity, taking background metric $\hat g_{ab} = \delta_{ab}$, is written formally as
\begin{equation}
  Z[\Lambda] = \int \mathcal{D}\phi \,e^{-\int d^2z\, \partial_a\phi \partial^a\phi + \Lambda e^{\gamma\phi}}\,.
  \label{eq:liouville_path_int}
\end{equation}
To make rigorous sense of (\ref{eq:liouville_path_int}) we must define 
the measure on
$\phi$. In the probabilistic approach to QFT, the idea is that the term $e^{-\int d^2z \,\partial_a\phi \partial^a\phi}$ is
proportional to a Gaussian measure on the space of functions $\phi: D \to \mathbb{R}$ with Dirichlet boundary conditions where $D
\subset \mathbb{R}^2$ \cite{Simon_1975}. This ensemble is called the \textit{Gaussian free field} (GFF); it is
a generalisation of Brownian motion to higher dimensions and is defined as follows.

\begin{definition} 
Let $D \subset \mathbb{R}^d$ be some domain and define the inner product on functions $f: D \to
\mathbb{R}$ as
\begin{equation}
    \langle f,g \rangle_\nabla := \frac{1}{2\pi}\int_D \nabla f \cdot \nabla g\,d^dz\, ,
\end{equation}
and associated norm
\begin{equation}
    ||h||_\nabla^2 := \frac{1}{2\pi}\int_D \nabla h \cdot \nabla h\,d^dz.
\end{equation}
The GFF  is defined to be the measure  whose probability density is given by
\begin{equation}
    \rho(h) = const.\, \exp{\left(-\frac12 ||h||_\nabla^2\right)}.
\end{equation}
\end{definition}
\noindent The density $\rho(h)$  is nothing but the path integral measure for the massless free boson with Dirichlet boundary conditions. Hence 
 the formal path integral (\ref{eq:liouville_path_int}) is defined rigorously as
\begin{equation}
  \int \mathcal{D}\phi\, e^{-\int d^2z \partial_a\phi \partial^a\phi + \Lambda e^{\gamma\phi}} = \mathbb{E}[e^{-\int d^2z \Lambda
  e^{\gamma\phi}}],
\end{equation}
where the expectation is over the GFF measure.

In this formulation of Liouville gravity, the cosmological constant couples to the random area
\begin{equation}
    A = \int e^{\gamma\phi} d^2z\,,
\end{equation}
generated by the GFF $\phi$.
On the common base space $[0,1]^2$ there are then two measures:
\begin{itemize}
    \item The Lebesgue measure on $[0,1]^2$, $d^2z$;
    \item The random measure $d\mu_\gamma= e^{\gamma \phi} d^2z$, where $\phi$ is a GFF.
\end{itemize}
To visualize the random measure $\mu_\gamma$, choose $\delta\in(0,1) $ and, starting with  the base space $[0,1]^2$, iteratively divide squares into four quadrants to obtain the set of largest square regions $S_i \subset [0,1]^2$ such that $\mu_\gamma(S_i) \leq \delta$.
This defines
the \textit{dyadic square decomposition} where each square has roughly the same quantum area $\delta$. 
The decomposition is shown in figure
\ref{fig:gff_dyadic} 
for a particular instance of the GFF. 
\begin{figure}[h]
  \begin{center}
    \includegraphics[width=0.45\textwidth]{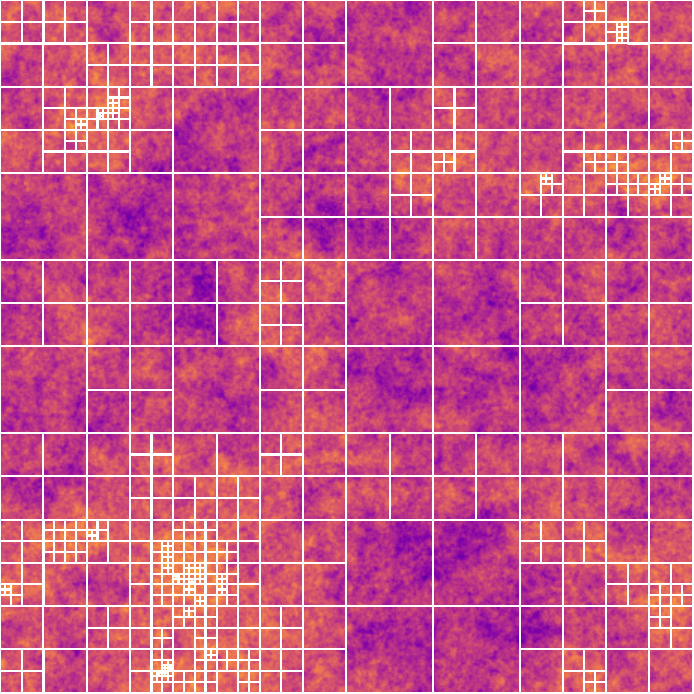}
    ~~~
    \includegraphics[width=0.45\textwidth]{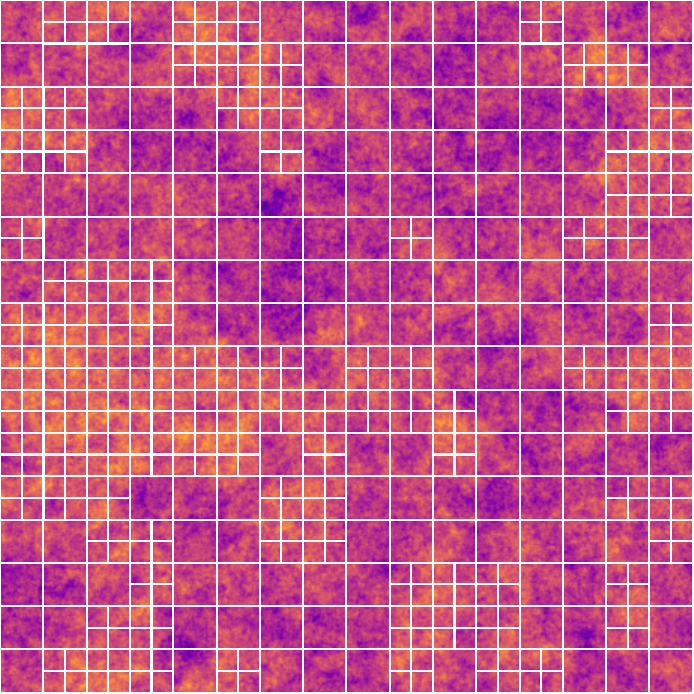}
  \end{center}
  \caption{A dyadic decomposition of $e^{\gamma\phi}$ with $\gamma = 2$ (left) and $\gamma=0.5$ (right)}
  \label{fig:gff_dyadic}
\end{figure}
The strength of the fluctuations is determined by $\gamma$, which is in turn determined by the type of matter on the background through (\ref{eq:gamma_c}).

Duplantier and Sheffield
\cite{duplantier2010liouville} observed that  a subset $K \subset [0,1]^2$ can be measured by using either the Lebesgue measure or the random measure. 
The random measure with fixed $\phi=0$, \emph{i.e.} no fluctuations in the 2D metric, is simply the Lebesgue measure, so the relationship between the two results characterises the effect of gravity on $K$.
To make this relationship precise first  define two kinds of balls:
\begin{definition}[Euclidean and quantum balls] For all $z \in [0,1]^2$ {\rm{:}}
\begin{itemize}
    \item $B_\epsilon(z)$ is the Euclidean ball of radius $\epsilon$ centred on $z$;
    \item $B^\delta(z)$ is the quantum ball centred on $z$ and of quantum area $\delta$ i.e. It is the Euclidean ball $B_\tau(z)$ with $\tau := \sup
      \{r \geq 0,~ \mu_\gamma(B_r(z)) \leq \delta \}$.
\end{itemize}    
\end{definition}
\noindent We can then define two different  scaling exponents for some fixed subset $K \subset [0,1]^2$:
\begin{definition}[Euclidean and quantum scaling] ~
\begin{itemize}
    \item The \textbf{Euclidean scaling exponent} $x = x(K)$ is defined as
        \begin{equation}
          \label{eq:classical_exponent}
            x(K) := \lim_{\epsilon \to 0} \frac{\log \mathbb{P}[B_\epsilon(z) \cap K \neq \emptyset]}{\log \epsilon^2}\,,
        \end{equation}
        where $z$ is sampled according to, and the probability $\mathbb{P}$ computed in, the Lebesgue measure;
    \item The \textbf{quantum scaling exponent} $\Delta = \Delta(K)$ is defined as
        \begin{equation}
            \label{eq:quantum_exponent}
            \Delta(K) := \lim_{\delta \to 0} \frac{\log \mathbb{E}[\mu_\gamma[B^\delta(z) \cap K \neq \emptyset]]}{\log \delta}\,,
        \end{equation}
        where $z$ is sampled according to, 
        and the expectation $\mathbb{E}$ computed in, 
        the random measure $\mu_\gamma$.
\end{itemize}
\end{definition}
The main result of Duplantier-Sheffield \cite{duplantier2010liouville} is the derivation of the
\textit{Knizhnik-Polyakov-Zamolodchikov} (KPZ) formula \cite{Knizhnik:1988ak} that relates these two scaling dimensions by
\begin{equation}
\label{eq:KPZ}
    x = \frac{\gamma^2}{4}\Delta^2 + (1-\frac{\gamma^2}{4})\Delta.
\end{equation}
Taking the Ising model as an example, we have $\gamma = \sqrt{3}$ and 2 primary fields $\varepsilon, \sigma$ with
scaling dimensions $x_\varepsilon = 1/2, x_\sigma = 1/16$.
Using (\ref{eq:KPZ}) we find that the quantum scaling dimensions take the shifted values
\begin{equation*}
    \Delta_\varepsilon = 2/3, ~~ \Delta_\sigma = 1/6.
\end{equation*}
These are in agreement with the matrix model calculations of Kazakov et al \cite{kazakov_ising_1986, Boulatov:1986sb}.
\section{The Ising model and topological defects}
\label{sec:top_ising}

The exact microscopic connection between topological defects in the two-dimensional Ising model  and the corresponding fusion category was elucidated in \cite{Aasen_2016}, and then extended to general height models in \cite{aasen2020topological}. In this section we review the formalism for the Ising model given in \cite{Aasen_2016}, and describe in detail how it can be applied to CT graphs of disk or annulus topology. 

\subsection{The Ising model in the plaquette formalism}
Consider the  Ising model with partition function $Z_G[\{J\}_G]$   \eqref{eqn:IsingZ} defined on a planar graph $G$. 
Now construct a new graph $\widetilde G$ formed by combining $G$ with its dual $G^*$ as follows: start with $G$ 
and $G^*$ overlaying each other, then join each $G$ vertex to the nearest $G^*$ vertices without crossing any original or dual edges, and finally remove the original and dual edges. 
See figure \ref{fig:lattices} for the case when $G$ is a square lattice. 
\begin{figure}
  \begin{center}
    \includegraphics[width=0.5\textwidth]{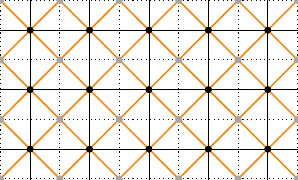}
  \end{center}
  \caption{The graph $\widetilde G$ (orange) formed from the original square lattice (black), and its dual (gray, dashed).
  }\label{fig:lattices}
\end{figure}
The graph $\widetilde G$ is formed of quadrilateral faces with opposite vertices belonging to the original or dual map. We call
these quadrilateral faces \textit{plaquettes}. It is important to note that the spins  exist either on $G$  or on $G^*$, but not both. Therefore, each plaquette only has 2 spins. 
Each plaquette represents an edge between spins on the original lattice. 

If $G$ is a CT, we have two types of plaquettes in $\widetilde G$, \textit{horizontal} and \textit{vertical} as shown in
figure \ref{fig:cdt_plaquettes}. By assigning weights to each plaquette we can rewrite the partition function as a product of
these weights. First, we define $u_H$ and $u_V$ via

\begin{figure}[h]
  \begin{center}
    \includegraphics[width=0.5\textwidth]{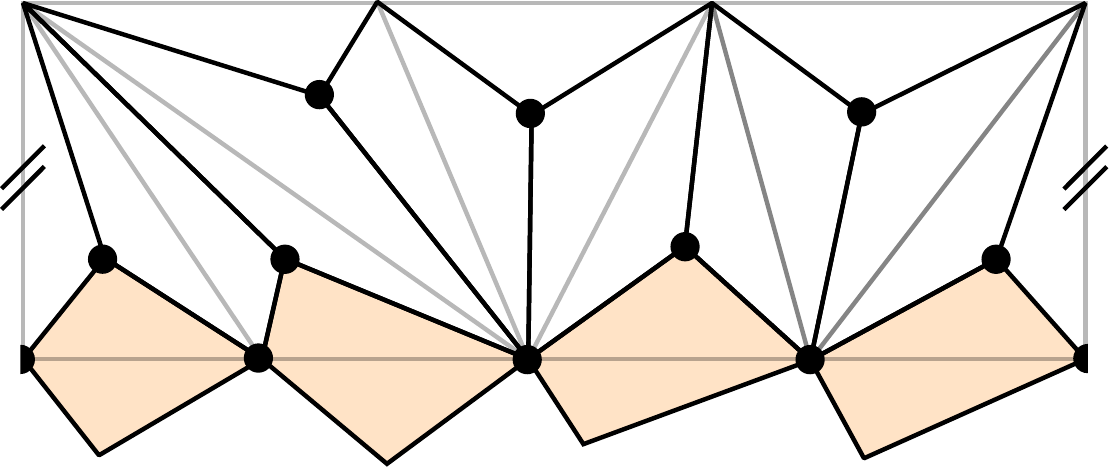}
  \end{center}
  \caption{Horizontal (orange) and vertical plaquettes for a single CDT strip where the dots show the location of the weights
  $d_v$.}\label{fig:cdt_plaquettes}
\end{figure}

\begin{equation}
    e^{2J_V} = \cot u_V,~~e^{2J_H} = \cot(\frac{\pi}{4} - u_H).
\end{equation}
where $J_H$ and $J_V$ are the couplings for horizontal and vertical edges respectively.
 It is useful in the context of defects to re-define the partition function to include a vertex factor $d_v$ 
\cite{Aasen_2016}
\begin{equation}
  \label{eqn:quantum_dims}
    d_v = \begin{cases}
        1 & \text{if $v$ has a spin} \\
        \sqrt{2} & \text{if $v$ is empty.}
    \end{cases}
\end{equation}
We assign these weights to the plaquettes by splitting them between the left and right vertices. Since
the vertical plaquettes only have spins on the top and bottom positions this gives an additional $\sqrt{2}$ factor. Horizontal
plaquettes have spins on the left and right positions so the additional factor from the weights is 1. 
We then define the weights for each plaquette 
\begin{equation}
\label{eq:plaquettes}
\begin{split}
    W_j^V(u_V)\equiv \drawdiamond[0.7cm]{}{a}{}{b} &= \sqrt{2}\,(\cos u_V \delta_{ab} + \sin u_V \sigma^x_{ab})  = \begin{cases}
        \cos u_V & a=b \\
        \sin u_V & a \neq b
    \end{cases} \\
    W_j^H(u_H)\equiv \drawdiamond[0.7cm]{b}{}{a}{} &= \frac{1}{\sqrt{2}}\,(\cos u_H + (-1)^{a+b}\sin u_H) = \begin{cases}
        \cos (\frac\pi 4 - u_H) & a=b \\
        \sin (\frac\pi 4 - u_H) & a \neq b
    \end{cases}
\end{split}
\end{equation}
where the labels are related to the spins by $\sigma = (-1)^a$. The partition function is now a product of plaquettes
\begin{equation}
    Z_G[J_V,J_H] = \sum_{\{\sigma\}} \prod_{p \in {\mathcal{P}_{\widetilde G}}} \drawdiamond[0.7cm]{c_p}{b_p}{a_p}{d_p},
\end{equation}
where $\mathcal{P}_{\widetilde G}$ denotes the set of plaquettes in $\widetilde G$.

\subsection{The spin-flip defect}
One can define an Ising model in the presence of a spin defect. A spin defect is a one-dimensional object that bisects a sequence
of edges in the original lattice and swaps the couplings $J\to-J$, turning the interaction from ferromagnetic to
anti-ferromagnetic and vice versa. We can implement this defect in the plaquette formalism by cutting a path along $\widetilde G$ and splitting the lattice in two -- making a copy of every spin along the cut. We then insert a sequence of parallelograms
which have a factor proportional to the Pauli matrix $\sigma^x$, enforcing that the spins on each side of the parallelogram are
opposite. The weight of the parallelogram is
\begin{equation}
\label{eq:spin_parallelogram}
    \includegraphics[scale=0.8, valign=c]{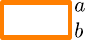} = 2^{-1/4}[\sigma^x]_{a,b} = \begin{cases}
        0 & a = b \\
        2^{-1/4} & a \neq b.
    \end{cases}
\end{equation}
With this definition, we can compute the effect of local manipulations and show that the spin defect can be moved around without
changing the partition function i.e. it is a \textit{topological defect}. In particular, for a defect to be topological, we must
show the following defect commutation relations 
\begin{equation}
\label{eq:defect_comm_1}
    \sum_{\text{internal spins}}\includegraphics[scale=0.3, valign=c]{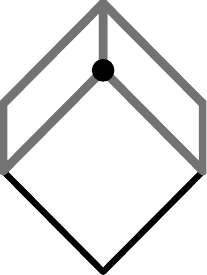} = \sum_{\text{internal
    spins}}\includegraphics[scale=0.3, valign=c, angle=180]{Figures/general_defect_rule_1.pdf},
\end{equation}
along with this diagram rotated by 90 degrees. We must also show
\begin{equation}
\label{eq:defect_comm_2}
    \sum_{\text{internal spins}} \includegraphics[scale=0.3, valign=c]{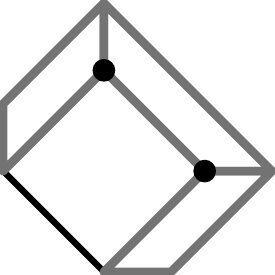} =
    \includegraphics[scale=0.3, valign=c]{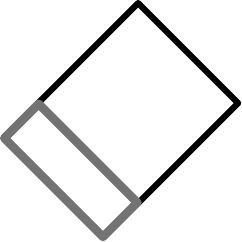},
\end{equation}
where the dots represent the weights $d_v$.
These two relations for a general defect (represented by the gray parallelograms) show that you can move the defect line around
the lattice without changing the partition function. It was shown by Aasen et al. \cite{Aasen_2016} that with the definition of
the spin defect in (\ref{eq:spin_parallelogram}), these defect commutation relations are indeed satisfied.

Clearly stacking two spin-flip defects on top of each other is equivalent to having no defect at all. For this reason it will be convenient to introduce the identity defect  which is constructed by inserting a string of parallelograms that simply identify the spins on each side
\begin{equation}
    \includegraphics[scale=0.8, valign=c]{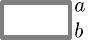}
    = 2^{-1/4}\delta_{ab}\,.
\end{equation}

\subsection{The duality defect}
The Ising model has one more non-trivial defect which implements the Kramers-Wannier duality of the Ising model \cite{Aasen_2016}.
The Kramers-Wannier duality replaces spins on the lattice with spins on the dual lattice. Hence, a duality defect must stitch
together spins on $G$  on one side with spins on $G^*$  on the other -- this is where the $\widetilde G$ picture comes into its own.
We define the parallelogram for the duality defect as
\begin{equation}
    \includegraphics[scale=0.8, valign=c]{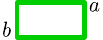} = 2^{-1/2}(-1)^{ab}.
\end{equation}
It was shown in \cite{Aasen_2016} that this too satisfies the defect commutation relations (\ref{eq:defect_comm_1} -
\ref{eq:defect_comm_2}) proving that it is a topological defect. An important difference between the spin defect and the duality
defect is the position of the spins. The spins are on diagonally opposite vertices of the parallelogram for the duality
defect. This has the effect of changing the plaquettes from vertical to horizontal and vice versa
i.e. $W^H(u_H) \to W^V(u_H)$ and $W^V(u_V) \to W^H(u_V)$.

As we already noted, the combination of two spin flip defects gives an identity defect. Further direct calculation \cite{Aasen_2016} shows that any three defect lines, labelled $\alpha,\beta,\gamma$, can interact at a triangle defect
\begin{equation}
     \includegraphics[scale=0.2, valign=c]{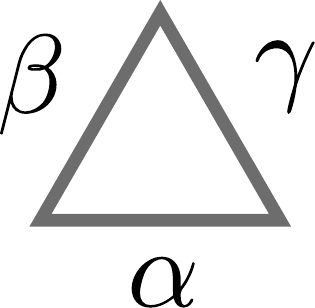} 
\label{eqn:triangle}
\end{equation}
which is non-zero only for certain allowed defect combinations. These are determined the Ising \emph{fusion category}, of which the identity, spin-flip and duality defects  form a complete microscopic representation, and to which we now turn.

\subsection{The Ising fusion category}
The microscopic calculations shown in the previous sections and computed in detail in \cite{Aasen_2016} allow us to forget about individual spins/plaquettes and instead
deal directly with the extended defect lines. They form the structure of a \textit{fusion category} that we will
describe in this section.
For completeness, we specify all the data of the Ising fusion category. We will not give a rigorous definition of fusion categories
here but simply state the objects and the formal manipulations of them that are permitted. 
Any fusion category has a finite set of objects $L$, which in the case of the Ising category are the duality field $\sigma$, the
spin field $\psi$ and the identity $\mathds{1}$. Each object is represented diagrammatically as a line in a graph. These objects satisfy an algebra, which 
in general, can be written as
\begin{equation}
  a\times b = \sum_c N^{c}_{ab}\,c,
\end{equation}
where $a,b,c \in \{\mathds{1}, \psi, \sigma\}$ and $N^{c}_{ab} \in \mathbb{N}$. The Ising algebra is
\begin{equation}
  \label{eq:fusion_alg}
    \psi \times \psi = \mathds{1};~~~\psi \times \sigma = \sigma \times \psi = \sigma;~~~\sigma \times \sigma = \mathds{1} + \psi.
\end{equation}
The vertex factors $d_a$ in (\ref{eqn:quantum_dims}) are called the \textit{Frobenius-Perron dimensions} or \textit{quantum dimensions}
and are defined as the maximal eigenvalues of the algebra coefficient matrix $[N_a]^c_b$. Again, for the Ising category these are
$d_\psi = d_\mathds{1} = 1$ and $d_\sigma = \sqrt{2}$.
The final piece of data in a fusion category is the set of \textit{F-symbols}. These dictate the  rules for manipulations,   called \textit{F-moves}, of the diagrammatic representation of the combination properties of the objects in $L$.
A general $F$-move takes the form
\begin{equation}
  \label{eqn:f-move}
  \includegraphics[scale=0.5, valign=c]{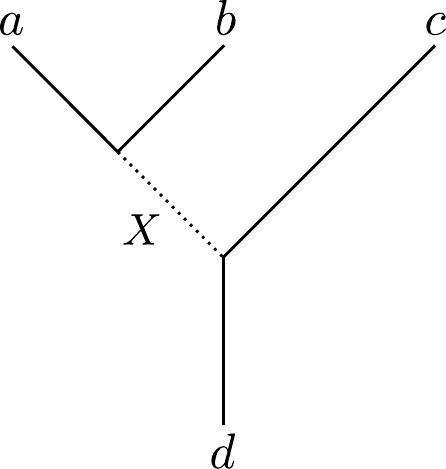}
  = \sum_{Y} [F^{abc}_d]_{XY} 
  \includegraphics[scale=0.5, valign=c]{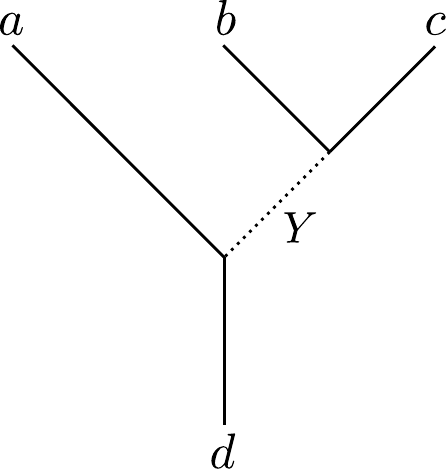}
\end{equation}
where the numbers $[F^{abc}_d]_{XY} \in \mathbb{R}$ are the $F$-symbols. These are not independent free parameters, they must
satisfy a number of self-consistency conditions \cite{aasen2020topological}.
One such solution for the Ising category has the following non-trivial $F$-symbols:
\begin{equation}
\begin{split}
  [F_\sigma^{\sigma\sigma\sigma}]_{\mathds{1}\mathds{1}} &= [F_\sigma^{\sigma\sigma\sigma}]_{\mathds{1}\psi} = [F_\sigma^{\sigma\sigma\sigma}]_{\psi\mathds{1}} = \frac{1}{\sqrt{2}}, \\
  [F_\sigma^{\sigma\sigma\sigma}]_{\psi\psi} &= -\frac{1}{\sqrt{2}}, \\
  [F_\psi^{\sigma\psi\sigma}]_{\sigma\sigma} &= [F_\sigma^{\psi\sigma\psi}]_{\sigma\sigma} = -1.
\end{split}
\label{eqn:ising_f_symbols}
\end{equation}
Any vertex that would imply the fusion of objects not allowed by the fusion algebra has a vanishing $F$-symbol. The remaining
non-zero $F$-symbols in the Ising category are 1 if allowed by the fusion algebra. 
We can compute an arbitrary planar fusion diagram using (\ref{eqn:f-move}--\ref{eqn:ising_f_symbols}) and the following rules:
\begin{equation}
    \includegraphics[scale=0.3, valign=c]{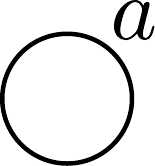} = d_a, ~~~
    \includegraphics[scale=0.3, valign=c]{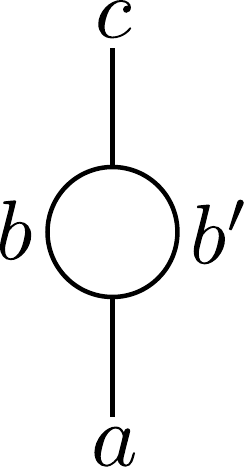} = \delta_{ac} \sqrt{\frac{d_b d_{b'}}{d_a}}
    ~\includegraphics[scale=0.3, valign=c]{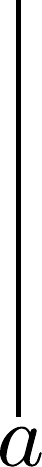}.
  \label{eq:diagram_rules}
\end{equation}
In Figure \ref{fig:f_moves} we show some examples of the application of the $F$-moves to the manipulation of topological defects.

\begin{figure}[h]
  \begin{center}
    \includegraphics[width=0.3\textwidth]{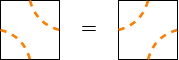}
    \[
    \includegraphics[width=0.1\textwidth, valign=c]{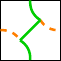}
    = - 
    \scalebox{-1}[1]{\includegraphics[width=0.1\textwidth, valign=c]{Figures/vertical_duality_horiz_spin.pdf}}
    \]
    \[
    \includegraphics[width=0.1\textwidth, valign=c]{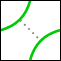}
    = \frac{1}{\sqrt{2}}\left[\,
      \includegraphics[width=0.1\textwidth, valign=c]{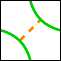}
      +
      \scalebox{-1}[1]{\includegraphics[width=0.1\textwidth, valign=c]{Figures/duality_with_identity_between.pdf}}
    \,\right]
    \]

  \caption{Non-trivial $F$-moves in the Ising fusion category. All of these moves are local and require no particular topology. The
  green solid line is the duality defect $\sigma$, the orange dashed line is the spin defect $\psi$ and the thin dotted line is
the identity $\mathds{1}$ (which is often omitted in the diagrams).}
  \label{fig:f_moves}
  \end{center}
\end{figure}

This association between topological defects in lattice models and fusion categories was fully established in \cite{aasen2020topological} where it is shown that the construction extends to   general height models.

\subsection{Dehn twist operators on the lattice}
\label{sec:conformal_dims}
A Dehn twist is an operation that cuts out a concentric circle at some finite distance from the origin, twists one half by a full revolution and glues the two halves back together at the end.
Here we show that operators that implement a Dehn twist can be constructed on a CT in a similar fashion to those constructed for a regular lattice in \cite{Aasen_2016}. 

Consider the punctured plane $C \in \mathcal C_\infty \setminus D$, where $D$ is a topological disk. 
This is a natural space to consider for a CT because we often think of the graphs as starting from the inner boundary cycle $S_0(C)$ and evolving radially, equivalently in height. Let $C^h\subset C$ denote the subgraph of $C$ consisting of all vertices and edges within a graph distance $h+1$ from $S_0(C)$ and including $S_h$ as the outer boundary.
Define $Z(C^h)$ to be the
Ising partition function on $C^h$, with some boundary conditions described below,
and represent it by the diagram
%
\begin{equation}
  Z(C^h) = \includegraphics[scale=1.1, valign=c]{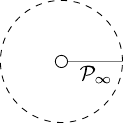}\,,
\end{equation}
where $\mathcal P_\infty$ is a path from a marked point on the inner boundary to the outer boundary; it is convenient to choose  $\mathcal P_\infty$ to be the relevant segment of the spine of $\beta^{-1} (\mathcal C)$.
Now insert an Ising topological defect that starts at a marked point on the inner boundary and extends to the outer boundary, crossing $\mathcal P_\infty$ $n$ times; again it is convenient to choose the points where the defect intersects the boundaries to lie on the spine of $\beta^{-1} (\mathcal C)$ as these are always well defined. We denote the corresponding partition function $Z_n^\phi(C^h)$ where $\phi = \mathds{1}, \psi, \sigma$ and 
represent it diagrammatically by 
\begin{equation}
  Z_n^\phi(C^h) = \includegraphics[scale=1.1, valign=c]{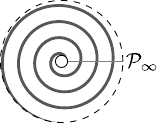}\,.
\end{equation}
The boundary conditions are chosen so that the defects cannot be moved at the boundaries and hence unravelled. 
We show in Appendix \ref{app:boundary_defects} that this can always be accomplished by an appropriate choice of spin values where the defect crosses the boundary. 

This construction can also be viewed in a transfer matrix formalism \cite{hernandez_bounds_2013}. For any height $0\le t\le h$ we define
the spin configuration $\ket{\{h_i^t\}}$ as a vector in a Hilbert space $\mathcal{H}_\phi$. The transfer matrix evolves
a state $\ket{\{h_i^t\}}$ to a state $\ket{\{h_i^{t+1}\}}$. There is a subtle distinction here: unlike a regular square
lattice, each height in a CT does not generally have the same number of spins. While this complicates writing a compact
expression for the transfer matrix, this issue is purely aesthetic. 
%
%
We now show how to construct Dehn twists in the transfer matrix formalism.
\begin{lemma}[Dehn twist operators]\label{DehnTwist}
For all $C \in \mathcal C_\infty \setminus D$, each defect $\phi = \psi, \sigma$ and every height $t$ there is an operator $\mathbf T_\phi$ acting on the Hilbert space $\mathcal H_\phi$ at height $t$
satisfying the relations
\begin{enumerate}
  \item $\mathbf T_\psi ^2 - \mathds 1_\psi = 0$,
  \item $\mathbf{T}_\sigma^4 - \sqrt{2}\mathbf{T}_\sigma^2 + \mathds{1}_\sigma = 0$,
\end{enumerate}
where the operators $\mathds 1_\phi$ act as the identity in the presence of a vertical defect $\phi$, i.e. $\mathds{1}_\phi\ket{\{h_i\}} = \ket{\{h_i\}}$.
\label{lemma:dehn}
\end{lemma}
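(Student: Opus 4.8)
The plan is to build $\mathbf T_\phi$ explicitly as the transfer-matrix avatar of the topological defect wrapping once around the annular region, and then deduce the stated polynomial relations from the fusion algebra \eqref{eq:fusion_alg} together with the $F$-moves \eqref{eqn:f-move}--\eqref{eq:diagram_rules}. Concretely, given a height $t$, I would let $\mathbf T_\phi$ act on $\mathcal H_\phi$ by inserting a $\phi$-defect loop encircling the inner boundary at height $t$ and contracting it into the slice; equivalently, in the diagrammatic language, $\mathbf T_\phi$ is the operator obtained by taking the configuration $Z_n^\phi(C^h)$ and sliding the defect once around the cycle $S_t(C)$, which by the commutation relations \eqref{eq:defect_comm_1}--\eqref{eq:defect_comm_2} is a well-defined map on states that does not depend on where along the cycle we start. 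Since the construction is local and, per the remark after \eqref{eq:diagram_rules}, requires no particular topology for the $F$-moves themselves, the fact that each height of a CT has a different number of spins is irrelevant: the operator is defined slice-by-slice.

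The key algebraic step is then to compute powers of $\mathbf T_\phi$ by fusing parallel defect loops. Stacking $\mathbf T_\psi$ twice means fusing two $\psi$-loops, and by $\psi\times\psi=\mathds 1$ this collapses to the identity defect, giving relation (1): $\mathbf T_\psi^2 = \mathds 1_\psi$. For $\sigma$, fusing two $\sigma$-loops gives, by $\sigma\times\sigma = \mathds 1 + \psi$, a decomposition $\mathbf T_\sigma^2 = \mathds 1_\sigma + \mathbf T_\psi^{(\sigma)}$ where $\mathbf T_\psi^{(\sigma)}$ is the $\psi$-loop operator acting in the $\sigma$-sector; I would then square again and use $\psi\times\psi=\mathds 1$, $\psi\times\sigma=\sigma$ and the loop-value rules \eqref{eq:diagram_rules} (which is where the $d_\sigma=\sqrt 2$ enters) to re-express $\mathbf T_\sigma^4$ in terms of $\mathbf T_\sigma^2$ and $\mathds 1_\sigma$, yielding $\mathbf T_\sigma^4 - \sqrt 2\,\mathbf T_\sigma^2 + \mathds 1_\sigma = 0$. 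The precise bookkeeping of the $\sqrt 2$ coefficient is exactly the statement that the Verlinde-type eigenvalues of the $\sigma$-loop operator are the ratios $S_{\sigma a}/S_{\mathds 1 a}$ of the Ising modular data, whose minimal polynomial is $\lambda^4 - \sqrt 2\lambda^2 + 1$.

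The main obstacle is not the algebra but showing that sliding the defect all the way around the cycle and closing it up actually returns a well-defined operator on $\mathcal H_\phi$ — i.e. that the "Dehn twist" interpretation is legitimate on an irregular CT where consecutive slices $S_t(C)$ and $S_{t+1}(C)$ have different lengths and the spine provides the only canonical marked points. This is where I would lean on the boundary-condition construction of Appendix \ref{app:boundary_defects}: the defect endpoints are pinned on the spine of $\beta^{-1}(C)$, so the loop operator is unambiguous, and the commutation relations guarantee that moving the defect through the intervening plaquettes — whatever their number or local structure — is an identity up to the $F$-symbols, which contribute only the scalar factors already accounted for. Once this well-definedness is in hand, relations (1) and (2) follow purely formally from \eqref{eq:fusion_alg} and \eqref{eq:diagram_rules}, so the bulk of the writeup is really the geometric/combinatorial verification, with the algebra as a short corollary.
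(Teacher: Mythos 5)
Your overall strategy --- realize $\mathbf T_\phi$ as a horizontal defect line at height $t$ crossing the vertical defect, and derive the polynomial relations from the fusion category data --- is the same as the paper's, and your treatment of the spin defect would go through. However, the quartic relation for $\sigma$, which is the nontrivial part of the Lemma, does not follow from the fusion algebra \eqref{eq:fusion_alg} alone as you outline it. Two concrete problems. First, resolving two parallel $\sigma$-lines via the $F$-move produces $\mathbf T_\sigma^2 = \tfrac{1}{\sqrt2}(\mathds 1_\sigma + \psi_\sigma)$; the prefactor $1/\sqrt2$ coming from $[F_\sigma^{\sigma\sigma\sigma}]$ in \eqref{eqn:ising_f_symbols} is not optional, and your unnormalized $\mathbf T_\sigma^2 = \mathds 1_\sigma + \mathbf T_\psi^{(\sigma)}$ already cannot reproduce the stated relation. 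Second, and more importantly, the operator $\psi_\sigma$ is a spin loop that is linked with the \emph{vertical duality defect}, and the paper's key step is the computation $\psi_\sigma^2 = -\mathds 1_\sigma$, where the minus sign comes from the $F$-symbol $[F_\psi^{\sigma\psi\sigma}]_{\sigma\sigma} = -1$ governing how the spin line passes the duality line. Naive fusion $\psi\times\psi=\mathds 1$, as you propose, gives $\psi_\sigma^2=+\mathds 1_\sigma$ and hence $\mathbf T_\sigma^4 = \sqrt2\,\mathbf T_\sigma^2$ instead of $\sqrt2\,\mathbf T_\sigma^2 - \mathds 1_\sigma$. That sign is essentially the entire content of part 2 of the Lemma, and it is exactly the step your writeup defers to ``precise bookkeeping.''

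Relatedly, your closing appeal to Verlinde-type eigenvalues $S_{\sigma a}/S_{\mathds 1 a}$ is incorrect and signals a conflation of two different operators. A closed $\sigma$-loop that does not link the vertical defect has eigenvalues $\{\sqrt2,\,0,\,-\sqrt2\}$ on the Ising sectors, with minimal polynomial $\lambda^3-2\lambda$; in particular it annihilates the $\sigma$-sector and cannot satisfy $\mathbf T_\sigma^4-\sqrt2\,\mathbf T_\sigma^2+\mathds 1_\sigma=0$, which forces $\mathbf T_\sigma$ to be invertible. The Dehn twist operator is a loop that braids with the vertical defect --- in the paper the crossing square is built from two triangle defects \eqref{eqn:triangle} --- and its eigenvalues are the twist phases $e^{2\pi i(h-\bar h)}$ of Corollary \ref{DTEvals}, not $S$-matrix ratios. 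Finally, for the duality defect you must also handle the domain wall where $C$ and $C^*$ meet (Appendix \ref{app:wall}); pinning the defect endpoints at the boundary (Appendix \ref{app:boundary_defects}) is necessary but not sufficient for the manipulations that move $\psi_\sigma$ around the cycle.
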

\begin{proof}
Consider the first case, where we have a spin defect that runs from the inner boundary out to infinity without crossing
$\mathcal P_\infty$ represented by the diagram
\begin{equation}
  Z^\psi_0(C^h)= \includegraphics[scale=1.1, valign=c]{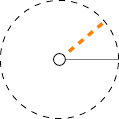}\,.
\end{equation}
Choose $ h> t'> t>0$ and define the action of operators $\mathds 1_\psi$ and $\mathbf T_\psi$  at  height $t$ by cutting open the graph 
and inserting respectively a line of identity and spin-flip defect plaquettes  as shown:
%
%
\begin{equation}
  \mathds{1}_\psi = 
    \includegraphics[valign=c, width=0.7\textwidth]{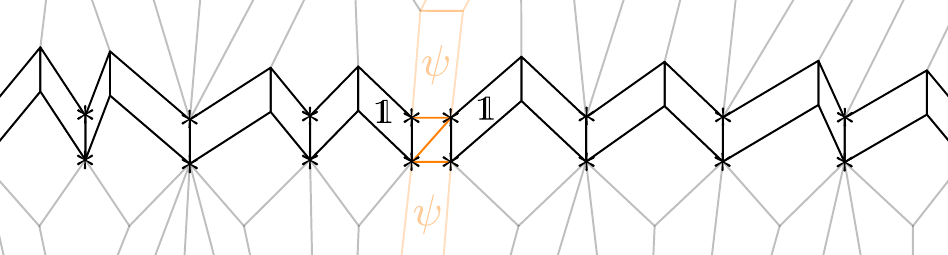}
  \label{eq:identity_spin_plaquettes}
\end{equation}
\begin{equation}
    \mathbf{T}_\psi = 
    \includegraphics[valign=c, width=0.7\textwidth]{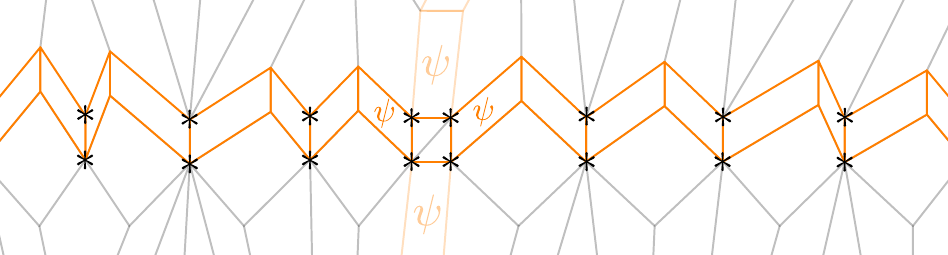}
  \label{eq:spin_spin_plaquettes}
\end{equation}
The square where the defects cross is simply two triangle defects \eqref{eqn:triangle} joined by an edge and summed over the defect label of that edge. 
%
%
Applying local moves in the partition function picture one can move the plaquettes to obtain:
\begin{equation}
  \includegraphics[valign=c, width=0.6\textwidth]{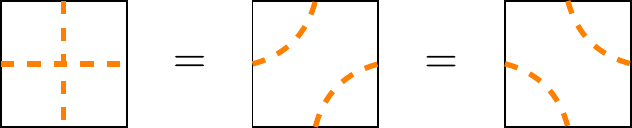}\,. 
\end{equation}
It follows that the partition function on $C^h$ with a vertical spin-flip defect of twist number $n=0$ and a horizontal spin-flip defect $\mathcal{D}_\psi$ inserted at  height $t$ satisfies
\begin{equation}
Z_0^{\psi\psi_t}(C^h)=Z_1^\psi(C^h).
\end{equation}
That is to say the action of the operator $ \mathbf{T}_\psi$ implements the Dehn twist. 
%
%
Now applying $\mathbf T_\psi^2$ at height $t$, we can use local moves to move one insertion of $\mathbf{T}_\psi$ to a
height $t'>t$ and apply $F$-moves to the defect lines to obtain
\begin{equation}
  Z_0^{\psi\psi_t\psi_t}(C^h)=Z^\psi_2(C^h) = \includegraphics[scale=1.1, valign=c]{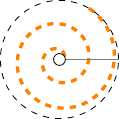} 
  = \includegraphics[scale=1.1, valign=c]{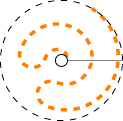} 
  = \includegraphics[scale=1.1, valign=c]{Figures/spin_twisted.pdf}
= Z^\psi_0(C^h)\,.
\end{equation}
In the transfer matrix picture we therefore have   $\mathbf{T}_\psi^2\ket{\{h_i\}} = \mathds{1}_\psi\ket{\{h_i\}}$. These manipulations apply for any $h>t+2$ so we can take $h$ to infinity thus proving part 1 of the Lemma.

The calculation for the duality defect follows the same lines but with some added technical details. Consider a single
duality defect that starts from the inner boundary and extends to infinity without wrapping around the centre shown diagrammatically by
\begin{equation}
  \label{eqn:duality_id}
  Z^\sigma_0(C^h) = \includegraphics[scale=1, valign=c]{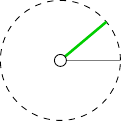}\,.
\end{equation}
There is an additional subtlety in the duality case compared to the previous spin one. The duality defect interfaces between spins
living on the vertices of $C$ and spins living on the vertices of $C^*$. Due to the topology of the space we are
considering, there must be another boundary where $C$ and $C^*$ meet again -- this is the wall. We choose to place the
wall along $\mathcal P_\infty$ for convenience. Crucially, we show in Appendix \ref{app:wall} that the wall does not pose a problem in
any manipulations that follow.

As before, we define an identity operator and an operator $\mathbf T_\sigma$ that act on the spins at height $t$ in terms of the plaquettes
\begin{equation}
  \mathds{1}_\sigma = \includegraphics[width=0.8\textwidth, valign=c]{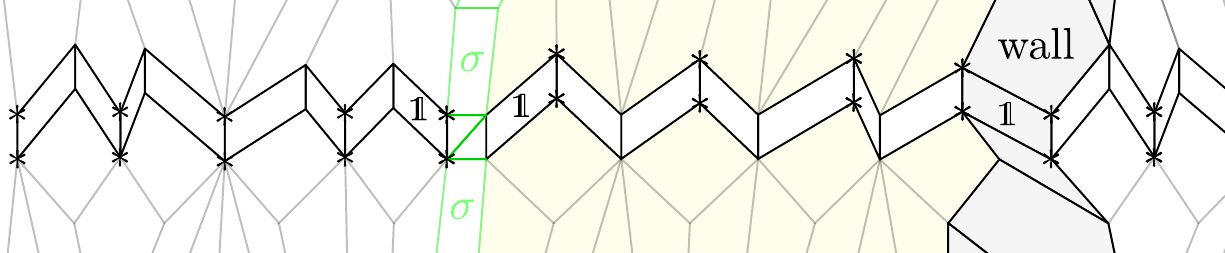},
  \label{eqn:duality_id_plaquettes}
\end{equation}
\begin{equation}
    \mathbf{T}_\sigma = 
    \includegraphics[valign=c, width=0.8\textwidth]{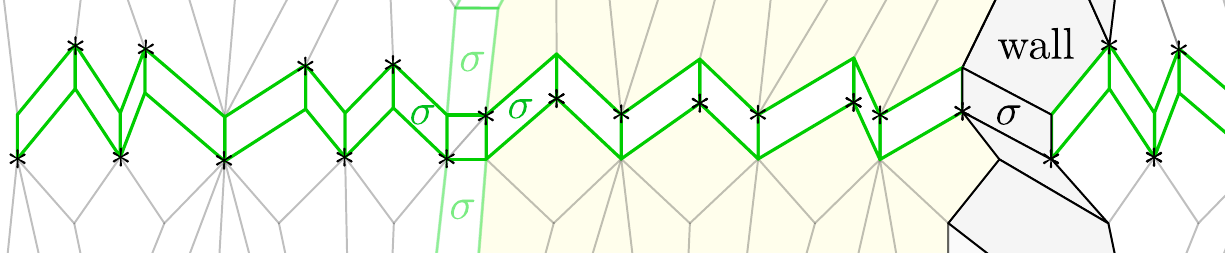},
  \label{eq:duality_duality_plaquettes}
\end{equation}
where the appropriate wall plaquettes are included. The operators $\mathds{1}_\sigma$ and $\mathbf{T}_\sigma$ can also be defined
on the slice where the spins are on the other diagonals of the duality plaquettes. This gives us two isomorphic Hilbert spaces
$\mathcal{H}_\sigma$ and $\hat{\mathcal{H}}_\sigma$. Currently, $\mathbf{T}_\sigma$ takes $\mathcal{H}_\sigma$ to $\hat{\mathcal{H}}_\sigma$ 
and vice-versa. Instead we can unify this into one operator that acts on $\mathcal{H}_\sigma \oplus \hat{\mathcal{H}}_\sigma$
\cite{Aasen_2016}.

Applying $\mathbf{T}_\sigma^2$ at height $t$ and applying local topological moves, we can use the partition function picture
and $F$-moves to compute the result:
\begin{equation*}
Z^\sigma_2(C^h) = 
\includegraphics[scale=1, valign=c]{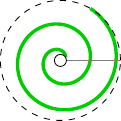}
= \frac{1}{\sqrt{2}}
\left[
\includegraphics[scale=1, valign=c]{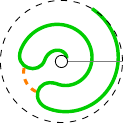}
+
\includegraphics[scale=1, valign=c]{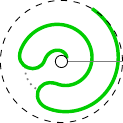}
\right],
\end{equation*}
\begin{equation*}
= \frac{1}{\sqrt{2}}\left[
  \includegraphics[scale=1, valign=c]{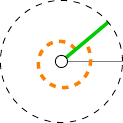}
+
  \includegraphics[scale=1, valign=c]{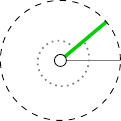}
\right],
\end{equation*}
where the details of how the defects pass through the wall can be found in Appendix \ref{app:wall}.
In terms of the operators:
\begin{equation}
    \mathbf{T}_\sigma^2 = \frac{1}{\sqrt{2}}\left(\psi_\sigma + \mathds{1}_\sigma\right),
\end{equation}
\begin{equation}
    \mathbf{T}_\sigma^4 = \frac{1}{2}\left(\psi_\sigma + \mathds{1}_\sigma\right)^2 = \frac{1}{2}\left(\mathds{1}_\sigma +
    2\psi_\sigma + \psi_\sigma^2 \right),
\end{equation}
where $\psi_\sigma$ is an operator made of horizontal spin plaquettes in the presence of a vertical duality defect -- we omit an explicit diagram for the sake of brevity.
It remains to calculate $\psi_\sigma^2$, which is most easily done diagrammatically with the use of $F$-moves:
\begin{equation*}
  \includegraphics[scale=1, valign=c]{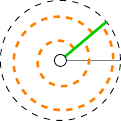}
=
\includegraphics[scale=1, valign=c]{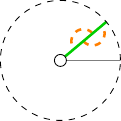}
= -~
\includegraphics[scale=1, valign=c]{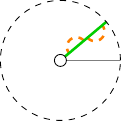}
=-~
\includegraphics[scale=1, valign=c]{Figures/duality_identity.pdf}
= -Z_0^\sigma(C^h),
\end{equation*}
or in terms of operators $\psi_\sigma^2 = - \mathds 1_\sigma$,
which gives us an algebraic equation for $\mathbf{T}_\sigma$:
\begin{equation}
    \mathbf{T}_\sigma^4 = \sqrt{2}\mathbf{T}_\sigma^2 - \mathds{1}_\sigma.
\end{equation}
As before, this applies for any $h>t+2$ so we can take $h$ to infinity thus proving part 2 of the Lemma.
\end{proof}

\begin{corollary}[Eigenvalues of Dehn twist operators]\label{DTEvals}
  The eigenvalues of the operators $\mathbf T_\psi$ and $\mathbf T_\sigma$ are given by
  \begin{itemize}
    \item $\mathbf T_\psi: \lambda = \pm 1$
    \item $\mathbf T_\sigma: \lambda = e^{2\pi i/16},~e^{-2\pi i/16},~e^{-2\pi i\cdot 7/16},~e^{2\pi i \cdot 7/16}$
  \end{itemize}
  \label{cor:dehn}
\end{corollary}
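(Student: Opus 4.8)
The plan is to read the spectra straight off the operator identities of Lemma~\ref{lemma:dehn}: an eigenvalue of an operator must be a root of any polynomial that annihilates it, so the ``hard'' inclusion (eigenvalues lie among the stated numbers) is immediate, and the only content is checking that \emph{every} such root is actually attained. For $\mathbf T_\psi$ the relation $\mathbf T_\psi^2=\mathds 1_\psi$ forces every eigenvalue $\lambda$ to satisfy $\lambda^2=1$, so $\lambda\in\{+1,-1\}$. Both values occur because $\mathbf T_\psi$ is not a scalar: the Lemma gives $Z_0^{\psi\psi_t}(C^h)=Z_1^\psi(C^h)$, so the once-twisted amplitude differs from the untwisted one and $\mathbf T_\psi\neq\mathds 1_\psi$, while the two amplitudes are honest nonnegative lattice partition functions in the quenched model and hence cannot be negatives of one another, so $\mathbf T_\psi\neq-\mathds 1_\psi$ either; a non-scalar involution has spectrum $\{+1,-1\}$.

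For $\mathbf T_\sigma$ the relation $\mathbf T_\sigma^4-\sqrt2\,\mathbf T_\sigma^2+\mathds 1_\sigma=0$ says each eigenvalue solves $\lambda^4-\sqrt2\,\lambda^2+1=0$; writing $\mu=\lambda^2$ gives $\mu=\tfrac{1}{\sqrt2}(1\pm i)=e^{\pm i\pi/4}$, whence
\begin{equation*}
  \lambda\in\bigl\{\,e^{i\pi/8},\ -e^{i\pi/8},\ e^{-i\pi/8},\ -e^{-i\pi/8}\,\bigr\},
\end{equation*}
and reducing the phases modulo $2\pi$ in units of $2\pi/16$ turns this list into $e^{2\pi i/16},\,e^{-2\pi i/16},\,e^{-2\pi i\cdot 7/16},\,e^{2\pi i\cdot 7/16}$, exactly the four values claimed (consistent, as a check, with $e^{2\pi i/16}$ being the topological spin of the duality object $\sigma$, $h_\sigma=1/16$). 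To see that all four occur I would exploit two symmetries of the spectrum. First, $\mathbf T_\sigma$ is invertible (from its defining relation) and a real matrix in the spin basis, so its spectrum is stable under complex conjugation. Second, as noted in the proof of the Lemma, $\mathbf T_\sigma$ exchanges the two isomorphic half-spaces $\mathcal H_\sigma$ and $\hat{\mathcal H}_\sigma$, hence is block-off-diagonal on $\mathcal H_\sigma\oplus\hat{\mathcal H}_\sigma$, and for such a matrix $\lambda$ is an eigenvalue iff $-\lambda$ is. The four candidate values form a single orbit of the group generated by these two involutions and contain no fixed point, so a nonempty spectrum contained in them and closed under both operations must be all of them; nonemptiness is clear since $\mathbf T_\sigma\neq0$. (A more pedestrian alternative: $\psi_\sigma^2=-\mathds 1_\sigma$ with $\psi_\sigma$ real forces both $\pm i$ into the spectrum of $\psi_\sigma$, hence both $e^{\pm i\pi/4}$ into that of $\mathbf T_\sigma^2=\tfrac1{\sqrt2}(\psi_\sigma+\mathds 1_\sigma)$, and the block-off-diagonal structure then produces all four square roots.)

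Thus the routine part is pure algebra; the main obstacle is the completeness claim, i.e.\ pinning down the non-scalar statement for $\mathbf T_\psi$ and the exchange/reality structure of $\mathbf T_\sigma$, and checking -- exactly as in the Lemma -- that these properties survive the $h\to\infty$ limit rather than being artefacts of a finite $C^h$. A clean alternative route to completeness is available if one wants to bypass these structural points entirely: the matrix elements of $\mathbf T_\sigma$ in the fusion-category representation lie in $\mathbb Q(\sqrt2)$, over which $x^4-\sqrt2\,x^2+1$ is irreducible (it is the minimal polynomial of the primitive $16$th root of unity $e^{i\pi/8}$ over $\mathbb Q(\sqrt2)$), so it must equal the minimal polynomial of $\mathbf T_\sigma$ and all four of its roots are eigenvalues; failing that, a direct diagonalisation of $\mathbf T_\sigma$ on the smallest non-trivial $C^h$ exhibits the four eigenvalues explicitly.
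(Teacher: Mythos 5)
Your core argument is the same as the paper's: Lemma~\ref{lemma:dehn} gives polynomial identities satisfied by $\mathbf T_\psi$ and $\mathbf T_\sigma$, so every eigenvalue is a root of $\lambda^2-1$ or of $\lambda^4-\sqrt2\,\lambda^2+1$, and your factorisation and phase bookkeeping reproducing the four sixteenth roots of unity is correct. Where you go beyond the paper is in worrying about the converse inclusion --- that every root is actually attained. The paper's proof stops at ``turn the operator equations into eigenvalue equations'' and silently treats the root set as the spectrum; strictly speaking the attainment matters downstream, since Theorem~\ref{lem:ScalingDims} needs the $\lambda=-1$ and $\lambda=e^{2\pi i/16}$ sectors to be nonempty. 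Your spectral-symmetry argument for $\mathbf T_\sigma$ (reality plus the block-off-diagonal structure on $\mathcal H_\sigma\oplus\hat{\mathcal H}_\sigma$ forcing closure under conjugation and negation, with the four candidates forming a single fixed-point-free orbit) and the minimal-polynomial argument over $\mathbb Q(\sqrt2)$ are both sound routes that the paper does not take. One caveat: your claim that $\mathbf T_\psi\neq\mathds 1_\psi$ ``because the once-twisted amplitude differs from the untwisted one'' is asserted rather than derived --- the Lemma shows that inserting $\mathbf T_\psi$ maps $Z_0^\psi$ to $Z_1^\psi$, not that these are unequal, so to rule out $\lambda=-1$ being absent (or $+1$ being absent) you would still need to exhibit a state on which $\mathbf T_\psi$ acts non-trivially. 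Since the paper itself does not attempt this, your proposal is, if anything, more complete than the published proof, modulo that one unproven assertion.
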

\begin{proof}
  Both relations follow directly by turning the operator equations in Lemma \ref{lemma:dehn} into eigenvalue equations. 
In the spin defect case, this becomes $\lambda^2 = 1$.
In the duality defect case, we have $\lambda^4 - \sqrt{2}\lambda^2 + 1 = 0$.
\end{proof}

The Dehn twist operators described here are a topological construction on the graph $C$. Lemma \ref{DehnTwist} and Corollary \ref{DTEvals} are exact statements of their properties which are independent of the values of the Ising weights and will persist in any continuum limit. Provided that the Dehn twist itself exists on continuum CTs, we can equate  the  eigenvalues of $\mathbf T_{\psi,\sigma}$ to those of $e^{2\pi i (L_0 - \bar{L}_0)}$ which is the known form of the operator in the continuum. Before we can do this, we must first prove that it is possible to construct a Dehn twist on continuum CTs, which we do in the following section.


\section{Stochastic formulation of CDT}
\label{sec:stochastic}
We derive a stochastic differential equation for the length of the spatial slice at time $t$ starting from the discrete
Galton-Watson process as an alternative to the results in \cite{lamperti_ney1968, sisko2011note}. We then show how this generates
a random measure on some base space in an exact analogy to the Liouville gravity picture.
\subsection{The \lpfull}
\begin{theorem}[\lpfull]
\label{thm:sde}
Let \(L(t)\) be the length of the spatial slice at time t. Then \(L(t)\) satisfies the It\^o integral equation
\begin{equation}
    L(t) = L(0) + \int_0^t \sqrt{f''(1)L(s)}\,dW(s) + \int_0^t f''(1)\,ds,
\end{equation}
or equivalently, the stochastic differential equation
\begin{equation}
    dL(t) = \sqrt{f''(1) L(t)}dW(t) + f''(1)dt,
\end{equation}
where $f(s)$ is the generating function of the GW process.
We call L(t) a \lpfull~(\lp) since it was first discovered by them \cite{lamperti_ney1968}.
\end{theorem}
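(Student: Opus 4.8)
The plan is to obtain the SDE as a diffusion limit of the rescaled Galton–Watson process describing the length $L_N(t)$ of the spatial slice after $N$ generations, under the parabolic rescaling appropriate to a critical branching process. First I would set up the discrete object: starting from the UICT measure, the length at height $t+1$ is the total number of offspring of the $L(t)$ vertices sitting in the slice at height $t$, so $L(t+1)=\sum_{i=1}^{L(t)}\xi_i^{(t)}$ where the $\xi_i^{(t)}$ are i.i.d.\ with generating function $f$, mean $f'(1)=1$ and variance $f''(1)$ (using $f''(1)=\mathrm{Var}(\xi)+f'(1)^2-f'(1)$, finite by the standing assumption). This identifies $L(t)$ as a critical Galton–Watson process in discrete time with immigration/structure dictated by the spine, but the bulk scaling is governed purely by the offspring law.

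Next I would introduce the diffusive rescaling $L^{(N)}(t):=\tfrac1N L(\lfloor Nt\rfloor)$ with $L^{(N)}(0)\to L(0)$, and compute the infinitesimal mean and variance of one step conditioned on the current state. Criticality gives $\mathbb{E}[L(t+1)-L(t)\mid L(t)]=0$ for the pure branching part, but the correct drift $f''(1)\,dt$ arises from the spine contribution (the immigration of finite trees along the spine, whose sizes have mean generated by $sf'(s)$, contributing an expected $f''(1)$ extra offspring per unit height) — so I would be careful to track that the UICT measure $\mu$, not the plain critical GW measure, is what produces the linear-in-$t$ drift. The conditional variance of the branching increment is $f''(1)\,L(t)$, which under rescaling becomes $f''(1)L^{(N)}(t)\cdot N\cdot(1/N)^2\cdot N\,dt = f''(1)L^{(N)}(t)\,dt$, matching the $\sqrt{f''(1)L(t)}\,dW$ term. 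I would then invoke a standard functional limit theorem for Markov chains converging to diffusions — e.g.\ the martingale-problem / generator-convergence criterion (Ethier–Kurtz, Stroock–Varadhan), checking the Lindeberg-type negligibility of large jumps, which holds because $f''(1)<\infty$ controls second moments — to conclude that $L^{(N)}$ converges in distribution to the solution of $dL=\sqrt{f''(1)L}\,dW+f''(1)\,dt$. Existence and uniqueness for this SDE is classical (it is a Feller/CIR-type diffusion, squared Bessel up to rescaling; pathwise uniqueness follows from the Yamada–Watanabe condition since $x\mapsto\sqrt{x}$ is $\tfrac12$-Hölder), and the Itô integral form is just the integrated version.

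The main obstacle I expect is pinning down the drift term $f''(1)\,dt$ cleanly and showing it is exactly this and not something measure-dependent in a subtler way: the naive critical GW process has zero drift and would converge to a (time-changed) squared Bessel process of dimension $0$, which hits $0$ and is absorbed, whereas the UICT/LNP has a strictly positive drift keeping $L(t)$ from dying — so the whole content is in correctly accounting for the size-biasing along the spine. Concretely I would compute $\mathbb{E}_\mu[L(t+1)\mid \mathcal{F}_t]$ by splitting the slice into the spine vertex (whose offspring number is distributed with generating function $sf'(s)$, hence mean $1+f''(1)$) and the $L(t)-1$ non-spine vertices (ordinary mean $1$), giving expected increment exactly $f''(1)$, independent of $L(t)$ — this is the key computation. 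An alternative, which I would mention as a cross-check, is the Lamperti-type representation: a critical continuous-state branching process with immigration is a time-change of a drifted Brownian motion, and matching the branching mechanism $\psi(u)=\tfrac12 f''(1)u^2$ with immigration rate $f''(1)$ reproduces the stated equation, recovering the original Lamperti–Ney result \cite{lamperti_ney1968} by a different route.
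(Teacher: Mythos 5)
Your proposal is correct and follows essentially the same route as the paper: the same diffusive rescaling of the slice-length recursion, the same identification of the drift $f''(1)$ from the size-biased offspring law $sf'(s)$ of the spine vertex and of the diffusion coefficient $f''(1)L$ from the offspring variance, followed by a standard weak-convergence theorem. The only (immaterial) difference is that you invoke generator/martingale-problem convergence (Ethier--Kurtz) where the paper isolates the martingale part explicitly and then applies the Kurtz--Protter theorem on convergence of stochastic integrals.
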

Before we prove this we need to introduce some definitions and prove some lemmas.
We showed that the ensemble of infinite CTs is in bijection to critical GW trees conditioned to survive forever.
The number of points at height $k+1$, $\eta_{k+1}$ is given by
\begin{equation}
\label{discrete_diff_Z}
\eta_{k+1} = \eta_k + \sum_{j=1}^{\eta_k-1}Z_j + Z_0,
\end{equation}
where $Z_j = Y_j - 1$ and $\{Y_j\}$ are i.i.d. random variables whose distribution is given by the generating function $f(s)$,
i.e. they are the random variables whose value is the number of offspring of a point on the spatial slice. Similarly, $Z_0 = Y_0 -
1$ where $Y_0$ is the number of offspring of the special vertex on the infinite spine, which is distributed according to $sf'(s)$
\cite{durhuus2022trees}.
We list the following results for future reference:
\begin{equation}
\begin{split}
    \mathbb{E}Z_k &= 0, \\
    \mathbb{E}Z_0 &= f''(1), \\
    \mathbb{E}[Z_k^2] &= f''(1).
\end{split}
\end{equation}
We define the re-scaled process $L_k^n := \eta_k/n$, where $n$ is some integer which we will take to infinity at the end.
Rewriting (\ref{discrete_diff_Z}) in terms of $L_k^n$ becomes 
\begin{equation}
\label{eqn:scaled_diff}
L_{k+1}^n = L_k^n + \frac1n \left(\sum_{j=1}^{nL_k^n-1}Z_j + Z_0\right)
\end{equation}

\begin{lemma}
\label{lemma:martingale}
Define
\begin{equation}
  \xi_{k+1}^n = \frac{1}{\sqrt{nL_k^n}}\left[\sum_{j=1}^{nL_k^n - 1} Z_j + Z_0 - \mathbb{E}Z_0 \right].
\end{equation}
Then $W_n(t) := \frac{1}{\sqrt{n}}\sum_{k=1}^{[nt]}\xi_k^n$ is a Martingale and $W_n(t) \Rightarrow W(f''(1)t) = \sqrt{f''(1)}W(t)$ where
$W$ is a standard Brownian motion. The convergence $(\Rightarrow)$ is in distribution.
\end{lemma}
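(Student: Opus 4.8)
\section*{Proof proposal for Lemma~\ref{lemma:martingale}}

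The plan is to view $\eta^n_k$, equivalently $L^n_k=\eta^n_k/n$, as a Markov chain and deduce the statement from the martingale functional central limit theorem (FCLT). First I would fix the filtration $\mathcal F^n_k:=\sigma(\eta^n_0,\dots,\eta^n_k)$, equivalently the $\sigma$-algebra generated by the offspring variables used to build the tree up to height $k$. Since $L^n_{k-1}$ is $\mathcal F^n_{k-1}$-measurable, conditioning on $\mathcal F^n_{k-1}$ fixes the number of summands $nL^n_{k-1}-1$, and the offspring variables $Z_j$ with $j\ge 1$ entering $\xi^n_k$ are then i.i.d.\ with $\mathbb E Z_j=0$, while $Z_0-\mathbb E Z_0$ is an independent mean-zero variable; hence $\mathbb E[\xi^n_k\mid\mathcal F^n_{k-1}]=0$. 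Therefore $W_n(t)=n^{-1/2}\sum_{k=1}^{[nt]}\xi^n_k$ is a (scaled) sum of martingale differences and is a martingale for the filtration $\mathcal G^n_t:=\mathcal F^n_{[nt]}$.

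To obtain the limit I would invoke a martingale invariance principle (e.g.\ the martingale FCLT of Helland, or Ethier--Kurtz Ch.~7): it suffices to verify (a) the predictable quadratic variation $\langle W_n\rangle_t=n^{-1}\sum_{k=1}^{[nt]}\mathbb E[(\xi^n_k)^2\mid\mathcal F^n_{k-1}]\to f''(1)\,t$ in probability for each fixed $t$, and (b) a conditional Lindeberg (asymptotic negligibility) condition. Given these, $W_n\Rightarrow M$, where $M$ is the continuous martingale with deterministic quadratic variation $f''(1)\,t$; by L\'evy's characterization applied after the deterministic time change this is $M(t)=\widetilde W(f''(1)t)=\sqrt{f''(1)}\,\widetilde W(t)$ for a standard Brownian motion $\widetilde W$, which is exactly the claimed limit.

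For (a), independence gives $\mathbb E[(\xi^n_k)^2\mid\mathcal F^n_{k-1}]=\frac{1}{nL^n_{k-1}}\big((nL^n_{k-1}-1)\,\mathbb E[Z_1^2]+\mathrm{Var}(Z_0)\big)=f''(1)+\frac{\mathrm{Var}(Z_0)-f''(1)}{nL^n_{k-1}}$, using $\mathbb E[Z_1^2]=f''(1)$; the key point is that the variance of the rescaled increment is $f''(1)$ to leading order precisely because a sum of order $nL^n_{k-1}$ i.i.d.\ terms is normalized by $\sqrt{nL^n_{k-1}}$. Summing, $\langle W_n\rangle_t=\frac{[nt]}{n}f''(1)+\frac{\mathrm{Var}(Z_0)-f''(1)}{n}\sum_{k=1}^{[nt]}\frac{1}{nL^n_{k-1}}$, so the first term tends to $f''(1)t$ and it remains to show the second vanishes. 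For (b) I would use the crude bound $\mathbb E[(\xi^n_k)^2\mathds{1}_{\{(\xi^n_k)^2>\epsilon^2 n\}}\mid\mathcal F^n_{k-1}]\le \epsilon^{-2}n^{-1}\mathbb E[(\xi^n_k)^4\mid\mathcal F^n_{k-1}]$, noting that the fourth moment of a normalized sum of i.i.d.\ centred variables is bounded uniformly in $k$ and $n$ as soon as the offspring distribution has a finite fourth moment ($f^{(4)}(1)<\infty$; a $2+\delta$ moment would suffice for a softer Lindeberg condition). Then the conditional Lindeberg sum is $O([nt]/n^2)\to 0$; alternatively one checks $\max_{k\le[nt]}|\xi^n_k|/\sqrt n\to 0$ in probability.

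I expect the main obstacle to be the remainder $n^{-1}\sum_{k=1}^{[nt]}(nL^n_{k-1})^{-1}\to 0$ in (a): because $\eta^n_k\ge 1$ always, each term is at most $1$, but one must still rule out the chain spending order-$n$ steps at values $\eta^n_{k-1}=O(1)$. This is where the positive drift $\mathbb E Z_0=f''(1)>0$ enters: from $\mathbb E[\eta^n_k]=\mathbb E\eta^n_0+kf''(1)$ and a maximal inequality for the martingale $\eta^n_k-\mathbb E\eta^n_0-kf''(1)$ one shows the chain reaches values of order $n$ within $O(1)$ of macroscopic time and stays there with probability tending to $1$, so the total contribution of small-$\eta$ steps is at most $O(n^{-1}\log n)$. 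I would isolate this as the one genuinely probabilistic estimate; everything else is bookkeeping with conditional moments. (The companion statement $L^n\Rightarrow L$ and the passage to the It\^o equation of Theorem~\ref{thm:sde} then follow by a continuous-mapping / Lamperti-type argument, which lies outside the scope of this lemma.)
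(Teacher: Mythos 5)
Your proposal is correct and follows essentially the same route as the paper: verify the martingale-difference property and then apply the martingale FCLT (Ethier--Kurtz) via convergence of the quadratic variation to $f''(1)t$. You are in fact more careful than the paper's own proof, which silently drops the remainder term $n^{-1}\sum_k (nL^n_{k-1})^{-1}$ and omits the conditional Lindeberg check altogether -- the two points you correctly isolate as the only genuinely nontrivial estimates.
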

\begin{proof}
  From Ethier \& Kurtz \cite{ethier_kurtz_book} $W_n(t)$ is a Martingale if $\mathbb{E}[\xi_{k+1}^n|\mathcal{F}_k^n] = 0$, where
  $\mathcal{F}_k^n$ is the filtration at $k$. This is true since given $L_k$, $\xi_{k+1}^n$ is a sum of random variables
  with mean 0. To show convergence in distribution as it is sufficient to show
  \[
    \frac1n \sum_{k=1}^{[nt]}(\xi_k^n)^2 \to f''(1)t
  \]
  in probability as $n\to\infty$.
  \begin{equation}
  \begin{split}
  \frac1n \sum_{k=1}^{[nt]}(\xi_k^n)^2 &= \frac1n \sum_{k=1}^{[nt]}\frac{1}{nL_k}\left(\sum_{i=1}^{nL_k-1}Z_i + Z_0 - \mathbb{E}[Z_0]\right)^2 \\
                                       &= \frac1n \sum_{k=1}^{[nt]}\frac{1}{nL_k}\sum_{i=0}^{nL_k-1}Z_i^2 + ... \\
                                       &\to f''(1)t
  \end{split}
  \end{equation}
  where in the second to last line, we ignore cross terms (because the $Z_i$ are independent) and the terms which 
  will surely vanish as $n\to\infty$. In the final line, we used the law of large numbers and $\mathbb{E}[Z_i^2] = f''(1)$.
\end{proof}

\begin{proof}[Proof of Theorem \ref{thm:sde}]
We can re-write equation \ref{eqn:scaled_diff} in terms of $\xi_k$.
\begin{equation}
  L_{k+1}^n = L_k^n + \frac{\sqrt{L_k^n}}{\sqrt{n}}\xi_{k+1}^n + \frac{\mathbb{E}[Z_0]}{n}
\end{equation}
Following Kurtz and Protter \cite{kurtz_protter1991} we define $L_n(t) := L^n_{[nt]}$, $W_n(t) :=
\frac{1}{\sqrt{n}}\sum_{k=1}^{[nt]}\xi_k^n$ and $V_n(t) := [nt]/n$ where $[nt]$ denotes the nearest integer to $nt$. It then follows
that 
\begin{equation}
    L_n(t) = L_n(0) + \int_0^t \sqrt{L_n(s)}dW_n(s) + \int_0^t\mathbb{E}[Z_0]~dV_n(s).
\end{equation}
By Lemma \ref{lemma:martingale} we know $W_n(t) \Rightarrow \sqrt{f''(1)}W(t)$ where $W$ is a standard Wiener process and
$V_n(t) \Rightarrow V(t) = t$. Kurtz and Protter \cite{kurtz_protter1991} showed in general that for any process of this form, $L_n(t) \Rightarrow
L(t)$ where $L(t)$ satisfies the integral stochastic equation
\begin{equation}
  L(t) = L(0) + \int_0^t \sqrt{f''(1)L(s)}dW(s) + \int_0^t \mathbb{E}[Z_0]~ds
\end{equation}
or in differential form
\begin{equation}
\label{eq:the_sde}
    dL(t) = \sqrt{f''(1) L(t)}dW(t) + f''(1)dt
\end{equation}
where we have used the fact that $\mathbb{E}[Z_0] = f''(1)$.
\end{proof}
\noindent
Re-scaling $L(t) \rightarrow 2L(t)/f''(1)$ gives us exactly the result found in  \cite{sisko2011note}. 

\subsection{Scaling dimensions of Ising CFT fields}
\label{sec:dehn}
Equipped with the existence of the \lpfull, we can now explicitly construct a Dehn twist in a randomly sampled continuum CT as follows. 
Identify the ends of the spatial slices to obtain a space with the topology of a
cylinder, where the circumference at height $t$ is $L(t)$.  
Choose a curve $c$ at height $t_0+1/2$ and let $A$ be a
neighbourhood of $c$ which is homeomorphic to $S^1 \times [t_0, t_0+1]$. Noting that $L(t)$ is a continuous function, we define coordinates $(s,t)$ on $A$ where $s = e^{2\pi i
x/L(t)}$, with $x \in [0, L(t)]$ and $t \in [t_0, t_0+1]$. Then the Dehn twist $f$ is defined as 
\[
  f:(e^{2\pi i x /L(t)}, t) \mapsto (e^{2\pi i (x + (t-t_0)L(t))/L(t)}, t)
\]
or, in terms of the $x$ coordinate, $x \mapsto x + (t-t_0)L(t)$. We observe that this construction does not exist for the 
Liouville case as the stochastic process described in section \ref{sec:StochasticLiouville} is not continuous.

\begin{theorem}[Scaling dimensions of Ising CFT operators]
  Assume that a continuum limit of the Ising model on a CT exists, then there exist fields in the continuum CFT with spin ($L_0 - \Bar{L}_0$) given by
  \begin{itemize}
    \item $h_\psi = 1/2 + n\,, n \in \mathbb Z$
    \item $h_\sigma = 1/16 + n\,, n \in \mathbb Z$
  \end{itemize}
\label{lem:ScalingDims}
\end{theorem}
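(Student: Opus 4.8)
The strategy is to carry the exact, lattice-weight-independent identities of Lemma~\ref{lemma:dehn} and Corollary~\ref{cor:dehn} into the scaling limit and there recognise the limiting Dehn-twist operator as the conformal generator of a $2\pi$ rotation of the spatial circle. I would start by recording that $\mathbf T_\psi$ and $\mathbf T_\sigma$ act on the nonempty defect Hilbert spaces $\mathcal H_\psi,\mathcal H_\sigma$ associated with a vertical $\psi$- or $\sigma$-defect threading the cylinder, and that every $F$-move used in Lemma~\ref{lemma:dehn} to derive $\mathbf T_\psi^2=\mathds 1_\psi$ and $\mathbf T_\sigma^4-\sqrt2\,\mathbf T_\sigma^2+\mathds 1_\sigma=0$ is topological, i.e.\ independent of $u_V,u_H$. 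Assuming a continuum limit exists and is a $c=1/2$ CFT on the cylinder, the defects survive, the operators $\mathbf T_\phi$ descend to operators on the continuum defect Hilbert spaces, and these still obey the relations of Lemma~\ref{lemma:dehn}, so they have the same spectra as in Corollary~\ref{cor:dehn}. The role of the preceding subsection is to guarantee that the Dehn-twist homeomorphism $f$ genuinely exists on continuum CTs --- which works precisely because the \lp\ $L(t)$ of Theorem~\ref{thm:sde} has almost surely continuous sample paths. Since $\mathbf T_\phi$ \emph{is} the Dehn twist at the lattice level (in the proof of Lemma~\ref{lemma:dehn} it sends an unwrapped vertical defect to a once-wrapped one), the limiting operator must coincide with the standard conformal action of $f$, namely $e^{2\pi i(L_0-\bar L_0)}$: the generator of rotations of the spatial circle is $L_0-\bar L_0$, a $2\pi$ rotation exponentiates it, and the $\pm c/24$ shifts cancel because $c=\bar c$.

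Granting this, a field of conformal spin $s=L_0-\bar L_0$ in $\mathcal H_\phi$ is an eigenvector of $\mathbf T_\phi$ with eigenvalue $e^{2\pi i s}$, so Corollary~\ref{cor:dehn} constrains $s$ to $\frac{1}{2}\mathbb{Z}$ in $\mathcal H_\psi$ and to $\pm\frac{1}{16}+\frac{1}{2}\mathbb{Z}$ in $\mathcal H_\sigma$. To produce the advertised fields one then checks that the nontrivial eigenvalues are actually attained. For $\psi$ this is immediate: $\mathbf T_\psi\neq\mathds 1_\psi$, since the once-wrapped vertical $\psi$-defect is inequivalent to the unwrapped one (the inner boundary, with the boundary conditions of Appendix~\ref{app:boundary_defects}, obstructs unwinding), so $-1$ lies in the spectrum and there is a field with $h_\psi=1/2+n$. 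For $\sigma$ one appeals to the structure of the limiting theory: assuming it is the Ising CFT (equivalently that $\sigma$ retains quantum dimension $\sqrt2$ and its fusion category), the duality-defect Hilbert space is the familiar one, obtained by a modular $S$-transform, and contains a field of dimensions $(h,\bar h)=(1/16,0)$, hence conformal spin $1/16$, for which $\mathbf T_\sigma=e^{2\pi i/16}$; this yields a field with $h_\sigma=1/16+n$ and, incidentally, shows that all four eigenvalues of Corollary~\ref{cor:dehn} occur.

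I expect the main obstacle to be making the limiting argument of the first paragraph rigorous: since we postulate only that \emph{some} scaling limit exists, one must argue that the lattice defect lines and their Hilbert spaces have genuine continuum counterparts, that the $\mathbf T_\phi$ descend to well-defined operators there obeying the Lemma~\ref{lemma:dehn} identities, and --- most delicately --- that the geometric Dehn twist $f$ built from $L(t)$ is exactly the mapping-class element these operators implement, so that the continuum operator is faithfully $e^{2\pi i(L_0-\bar L_0)}$ rather than an unknown power, multiple, or projection of it. The first two points rest on the weight-independence already stressed after Corollary~\ref{cor:dehn}; the third is precisely what the construction of $f$ from the continuous process $L(t)$ is designed to supply, in contrast with Liouville gravity where no such construction exists. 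A secondary but genuine subtlety, worth isolating as a remark, is that the Dehn-twist algebra \emph{alone} pins $h_\sigma$ down only to $\pm\frac{1}{16}+\frac{1}{2}\mathbb{Z}$ --- for example a plane rotation by $7\pi/8$ is a real operator satisfying $\mathbf T^4-\sqrt2\,\mathbf T^2+\mathds 1=0$ with eigenvalues only $e^{\pm 2\pi i\cdot 7/16}$ --- so selecting the representative $1/16+n$ really does need the identification of the limiting CFT or its defect sector; no analogous caveat affects $\psi$, whose relation is merely quadratic.
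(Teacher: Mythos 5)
Your proposal follows essentially the same route as the paper's proof: the topological (weight-independent) Dehn-twist relations of Lemma~\ref{lemma:dehn} and Corollary~\ref{cor:dehn} persist in the continuum, the geometric twist exists there because $L(t)$ is continuous, the limiting operator is identified with $e^{2\pi i(L_0-\bar L_0)}$, and the spins are read off from the eigenvalues in a chosen sector. Your extra remark that the quartic relation alone only pins $h_\sigma$ to $\pm\tfrac{1}{16}+\tfrac{1}{2}\mathbb{Z}$, so that selecting the $1/16$ representative needs additional input about the defect sector, is a slightly more careful treatment of a point the paper handles by simply asserting that such a sector exists.
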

\begin{proof}
  Firstly, the results of Corollary \ref{cor:dehn} are topological and so given the existence of a continuum limit, they persist at all scales. In the continuum, the known form of the Dehn twist operator is given by $e^{2\pi i (L_0 - \bar{L}_0)}$ and we have shown explicitly that it exists. Hence we can read off the spin of the field associated with $\psi$ as $h_\psi = 1/2 + n,\; n\in\mathbb Z$, where we have chosen the $\lambda = -1$ sector. We are free to choose the sector because local manipulations of plaquettes prove that the Dehn twist operators $\mathbf T_\psi$ and $\mathbf T_\sigma$ always commute with the transfer matrix. Therefore, we can label a state by the eigenvalues $\lambda_\psi$, $\lambda_\sigma$ and the spin configuration. Similarly for the duality defect, comparing the results of Corollary \ref{cor:dehn} to the continuum Dehn twist operator, we find that there is a sector where $h_\sigma = 1/16 + n,\;n\in\mathbb Z$.
\end{proof}

\subsection{Properties of the Lamperti-Ney process}
\label{sec:lnp}
As we showed above, the \lp~is defined by 
\AMSeqn{dL(t)=\sqrt{f''(1)L(t)}\,dW(t)+f''(1)\,dt\label{eqn:LPdefn}}
where $W(t)$ is a standard Wiener Process (WP) which has the property that
\AMSeqn{W_1(t)&=x^{-\half}W(xt)\label{eqn:WPpropA}\\
W_2(t)&=W(t+x)-W(x)\\
W_3(t)&=tW(-t^{-1})
}
are all WPs. In general if $g\in SL(2,\mathbb R)/\{\pm \mathds{1}\}$ (with the usual representation and $ad-bc=1$) then
\AMSeqn{W_g(t)=(ct+d) W\left(\frac{at+b}{ct+d}\right)-ct\,W\left(\frac{a}{c}\right)-d\,W\left(\frac{b}{d}\right)}
is also a WP.

\begin{lemma} \label{lem:LPequiv} If $L(t)$ is an \lp, then so is $L_x(t)\equiv x^{-1}L(xt)$.
\end{lemma}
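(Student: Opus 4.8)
The plan is to verify directly that $L_x(t)\equiv x^{-1}L(xt)$ satisfies the same It\^o stochastic differential equation \eqref{eqn:LPdefn} that defines an \lp, namely $dL_x(t)=\sqrt{f''(1)L_x(t)}\,dW'(t)+f''(1)\,dt$ for some standard Wiener process $W'$. Since the \lp\ is characterised by this SDE (with the understood initial condition carrying over as $L_x(0)=x^{-1}L(0)$), establishing the SDE is enough. The key tool is the Brownian scaling identity \eqref{eqn:WPpropA}: if $W$ is a standard WP then so is $W_1(t)=x^{-1/2}W(xt)$.

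First I would write $L(xt)$ using the integral form of the \lp\ with time argument $xt$:
\[
L(xt)=L(0)+\int_0^{xt}\sqrt{f''(1)L(s)}\,dW(s)+\int_0^{xt}f''(1)\,ds.
\]
Next I would change variables $s=xu$ in both integrals. The drift term becomes $\int_0^t f''(1)\,x\,du = f''(1)\,xt$, so after dividing by $x$ it contributes $\int_0^t f''(1)\,du$, exactly the drift of an \lp. For the stochastic integral, the substitution $s=xu$ sends $dW(s)$ to an increment of the rescaled Brownian motion; concretely, defining $W'(u):=x^{-1/2}W(xu)$, which is a standard WP by \eqref{eqn:WPpropA}, one has $dW(xu)=x^{1/2}\,dW'(u)$ (this is the content of the time-change formula for It\^o integrals, or can be checked on the level of the quadratic variation $d\langle W(x\cdot)\rangle_u = x\,du$). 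Hence
\[
\int_0^{xt}\sqrt{f''(1)L(s)}\,dW(s)=\int_0^t \sqrt{f''(1)L(xu)}\,x^{1/2}\,dW'(u)
= \int_0^t \sqrt{f''(1)\,x\cdot x^{-1}L(xu)}\;x^{1/2}\,dW'(u),
\]
so dividing by $x$ gives $\int_0^t x^{-1/2}\sqrt{f''(1)\,x\,L_x(u)}\,dW'(u)=\int_0^t \sqrt{f''(1)L_x(u)}\,dW'(u)$. Assembling the pieces,
\[
L_x(t)=L_x(0)+\int_0^t \sqrt{f''(1)L_x(u)}\,dW'(u)+\int_0^t f''(1)\,du,
\]
which is precisely the defining It\^o equation for an \lp\ driven by the Wiener process $W'$.

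The only genuinely delicate point — the ``hard part'' — is justifying the scaling rule for the stochastic integral, i.e.\ that substituting $s=xu$ in $\int_0^{xt}(\cdot)\,dW(s)$ legitimately produces $\int_0^t(\cdot)\,x^{1/2}\,dW'(u)$ with $W'$ a bona fide standard Wiener process adapted to the appropriately time-changed filtration. This is a standard deterministic time-change of an It\^o integral, but one should be a little careful that the integrand $\sqrt{f''(1)L(xu)}$ is adapted to the filtration generated by $W'$, which it is because $L(xu)$ depends only on $W(s)$ for $s\le xu$, equivalently on $W'(v)$ for $v\le u$. Once this is in hand the rest is a routine change of variables, and the same argument shows $L_x(0)=x^{-1}L(0)$ so that $L_x$ is an \lp\ in the full sense.
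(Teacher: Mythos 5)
Your proposal is correct and follows essentially the same route as the paper: both arguments rest on the Brownian scaling property \eqref{eqn:WPpropA} and a time change $s=xt$, with the factor $x^{-1}$ absorbed into the square root of the diffusion coefficient; the paper simply writes the computation in differential shorthand while you carry it out in integral form with the adaptedness of the time-changed integrand made explicit.
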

\begin{proof}
From \eqref{eqn:LPdefn} we have
\AMSeqn{dL_x(t)=d(x^{-1}L(xt) )&= x^{-1} \,dL(xt)\nn\\
&=\sqrt{f''(1)\, x^{-1}L(xt)}\,x^{-\half} dW(xt)+f''(1)\,dt\nn\\
&=\sqrt{f''(1)\, L_x(t)}\,dW_1(t)+f''(1)\,dt\,, }
where we have used \eqref{eqn:WPpropA}.
\end{proof}
\noindent
Now consider the `square'  segment of an LNP 
shown in figure \ref{fig:LPball},
\begin{figure}
\begin{center}
  \includegraphics[scale=0.8]{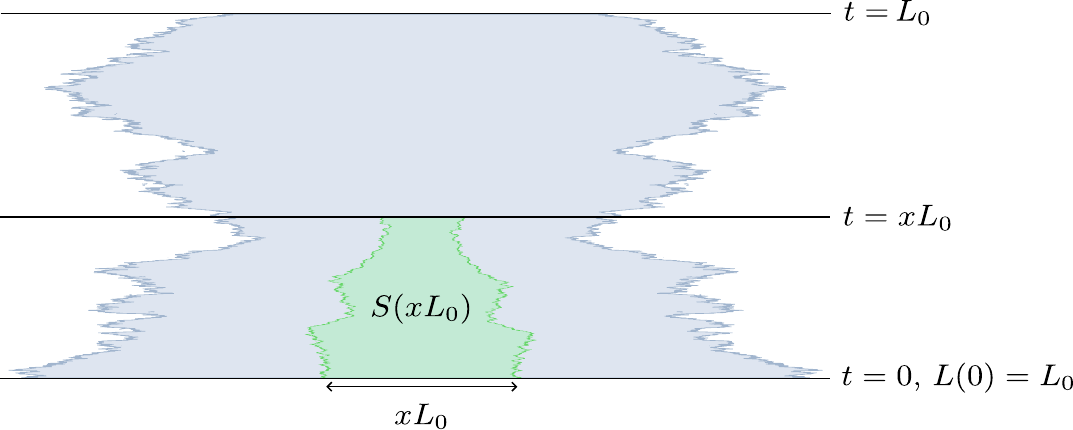}
  \end{center}
\caption{A segment of an LNP. The region shaded green is generated from a subset of the process at $t=0$ and has area $S(xL_0)$.}
\label{fig:LPball}
\end{figure}
then
\begin{lemma} The square area $S_x(L_0) \equiv x^{-2} S(xL_0)$, where $x>0$, is equal in law to  $S(L_0) = \int_{s=0,L(0) = L_0}^{L_0}L(s)\,ds$.
  \label{lem:area}
\end{lemma}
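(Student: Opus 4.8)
The plan is to use the scaling property of the Lamperti-Ney process established in Lemma~\ref{lem:LPequiv} and transfer it to the integrated quantity $S(L_0)$. First I would recall that $S(L_0)=\int_0^{T}L(s)\,ds$ where $L$ is an LNP started at $L(0)=L_0$ (here $T$ denotes whatever terminal time defines the ``square'' segment in figure~\ref{fig:LPball}; the argument is the same whether $T$ is deterministic or defined intrinsically by the geometry of the square, and I would make this precise at the outset). By Lemma~\ref{lem:LPequiv}, the time-rescaled process $L_x(t)=x^{-1}L(xt)$ is again an LNP, and it is started at $L_x(0)=x^{-1}L(0)=x^{-1}(xL_0)=L_0$ whenever the original process is started at $xL_0$. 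So the law of $\{L_x(t)\}_{t\ge 0}$ with $L(0)=xL_0$ coincides with the law of an LNP started at $L_0$.

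The key step is then a change of variables inside the integral. Writing $S(xL_0)=\int_0^{T'}L(s)\,ds$ for the process started at $xL_0$, I substitute $s=xt$, $ds=x\,dt$, which gives $S(xL_0)=\int_0^{T'/x}L(xt)\,x\,dt = x^2\int_0^{T'/x} x^{-1}L(xt)\,dt = x^2\int_0^{T'/x}L_x(t)\,dt$. The remaining point is that the upper limit also rescales correctly: the ``square'' region is scale-covariant, so the terminal time for the process started at $xL_0$ is $x$ times the terminal time for the rescaled process started at $L_0$, i.e. $T'=xT$ and hence $T'/x=T$. Therefore $x^{-2}S(xL_0)=\int_0^{T}L_x(t)\,dt$, and since $\{L_x(t)\}$ is equal in law to an LNP started at $L_0$, the right-hand side is equal in law to $\int_0^{T}L(t)\,dt = S(L_0)$. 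This establishes $S_x(L_0)\stackrel{d}{=}S(L_0)$.

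The main obstacle is making the self-similarity of the integration region rigorous rather than heuristic: one must argue that the ``square'' segment of figure~\ref{fig:LPball} is defined in a genuinely scale-covariant way (for instance, as the region swept out until the boundary length first reaches some multiple of $L_0$, or between $t=0$ and a fixed $t$ with the green subset being a fixed fraction of the initial slice), so that the stopping rule commutes with the rescaling $(L,t)\mapsto(x^{-1}L,x^{-1}t)$. Once the region is specified intrinsically in terms of $L_0$ and the process, this commutation is immediate and the change of variables above closes the proof; if instead $T$ were a fixed deterministic time independent of $L_0$, the statement would need $S_x$ to be defined with the correspondingly rescaled time horizon, which I would flag explicitly. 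A secondary, purely technical, point is interchanging the ``equal in law'' statement for the whole path with the equality of the integral functionals, which is justified because $L\mapsto\int_0^T L(s)\,ds$ is a measurable functional of the sample path (continuous in the sup norm on compact time intervals), so equality in law of the paths passes to equality in law of the integrals.
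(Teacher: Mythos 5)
Your proof is correct and follows essentially the same route as the paper: apply Lemma \ref{lem:LPequiv} after the substitution $s=xt$ and note that the time horizon rescales consistently. The concern you flag about the terminal time is already resolved by the lemma statement itself, where the upper limit is $L_0$ (and hence $xL_0$ for the scaled process), which is precisely the scale-covariant choice your argument requires.
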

\begin{proof}
Consider the scaled area process 
\AMSeqn{S_x(L_0) = x^{-2}S(xL_0) = x^{-2}\int_{s=0, L(0) = xL_0}^{xL_0} L(s)\,ds.}
Let $s = xt$, then
\AMSeqn{S_x(L_0) = \int_{s=0,\, x^{-1}L(0) = L_0}^{L_0} x^{-1}L(xt)\,dt.}
Applying Lemma \ref{lem:LPequiv} to the r.h.s. then gives
\AMSeqn{S_x(L_0) = \int_{s=0, L_x(0) = L_0}^{L_0} L_x(t)\,dt,}
which is equal in law to $S(L_0)$. 
\end{proof}
\noindent
In other words, $S(xL_0)$ is equal in law to $x^2S(L_0)$ and so scales like a canonical 2D area.

\begin{lemma}
\label{lem:LPositive}
The process $L(t), L(0) = L_0 > 0$ is strictly positive at all positive times.
\end{lemma}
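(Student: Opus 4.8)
The plan is to show that the Lamperti–Ney process (LNP), which is the diffusion
\AMSeqn{dL(t)=\sqrt{f''(1)L(t)}\,dW(t)+f''(1)\,dt\,,}
started from $L_0>0$, never reaches zero in finite positive time. The natural tool is a boundary-classification argument for one-dimensional diffusions: I would compute the scale function and speed measure near the boundary point $0$ and check Feller's criterion, which tells us whether $0$ is accessible. Equivalently — and this is probably the cleanest route to write down — I would observe that after the rescaling $L(t)\to 2L(t)/f''(1)$ mentioned just after Theorem \ref{thm:sde}, the process becomes (a time-changed version of) a squared Bessel process of dimension $\delta$ determined by the ratio of the drift coefficient to the diffusion coefficient. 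Reading off $dL = 2\sqrt{L}\,dW + 2\,dt$ in the normalized variables, the drift $2\,dt$ corresponds to dimension $\delta = 2$, and it is a classical fact that $\mathrm{BESQ}^\delta$ started from a strictly positive value stays strictly positive for all positive times precisely when $\delta \ge 2$. Since our process sits exactly at the borderline $\delta=2$, strict positivity holds, and I would cite the standard reference (e.g. Revuz–Yor) for this fact about squared Bessel processes.

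Concretely, the steps I would carry out are: (i) recall the scaling relation from Lemma \ref{lem:LPequiv}, which already hints that $0$ is only a ``scaling limit'' point and not genuinely attainable; (ii) perform the change of variables $\tilde L(t) = 2L(t)/f''(1)$ and apply It\^o's formula to see that $\tilde L$ satisfies $d\tilde L = 2\sqrt{\tilde L}\,dW + 2\,dt$, i.e. it is exactly $\mathrm{BESQ}^2$; (iii) invoke the comparison/boundary theory for squared Bessel processes to conclude $\tilde L(t)>0$ for all $t>0$ almost surely, hence $L(t)>0$; (iv) if one prefers a self-contained argument, define $Y(t) = 1/L(t)$ (or $Y(t)=-\log L(t)$) and use It\^o plus a supermartingale argument to show $Y$ cannot blow up in finite time, ruling out $L$ hitting $0$. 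An alternative elementary route is to write down Feller's test for explosion directly: compute $s(x)=\int^x \exp\!\left(-\int^y \frac{2\,f''(1)}{f''(1)u}\,du\right)dy = \int^x \exp(-2\log y)\,dy = \int^x y^{-2}\,dy$, whose integral $\int_{0^+} (s(x)-s(y))\,m(dy)$ diverges at $0$, so $0$ is a natural (inaccessible) boundary.

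The main obstacle — really the only subtle point — is that the process is exactly at the critical dimension $\delta=2$, so one must be careful: for $\delta<2$ the origin is reached (with probability one if $\delta\le 0$, with positive probability if $0<\delta<2$), and only at $\delta\ge 2$ is it inaccessible. I would therefore make sure the Feller integral at $0$ is shown to diverge rather than converge, which for the $y^{-2}$ scale density is a borderline logarithmic-type divergence and deserves an explicit line. A second minor technical point is that Theorem \ref{thm:sde} only gives the LNP as a weak (distributional) limit, so strictly speaking ``the process $L(t)$'' should be taken to mean any solution of the SDE \eqref{eqn:LPdefn} with $L(0)=L_0>0$; pathwise uniqueness for this SDE (Yamada–Watanabe, since $\sqrt{\cdot}$ is H\"older-$1/2$) guarantees this is unambiguous, and the positivity statement then holds for that unique strong solution up to the first hitting time of $0$, which the argument shows is $+\infty$ almost surely.
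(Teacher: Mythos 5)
Your overall strategy is the same as the paper's: rescale the \lp~into a squared Bessel process and invoke the classical positivity result from Revuz--Yor. However, your identification of the Bessel dimension is wrong. With your rescaling $\tilde L = 2L/f''(1)$, It\^o's formula gives
\begin{equation*}
d\tilde L = \frac{2}{f''(1)}\left(\sqrt{f''(1)\,L}\,dW + f''(1)\,dt\right) = \sqrt{2}\sqrt{\tilde L}\,dW + 2\,dt\,,
\end{equation*}
which is \emph{not} in the canonical form $dX = 2\sqrt{X}\,dW + \delta\,dt$, so you cannot simply read off $\delta=2$ from the drift. The general rule is $\delta = 4b/a^2$ for $dX = a\sqrt{X}\,dW + b\,dt$; here $a^2 = f''(1)$ and $b=f''(1)$, giving $\delta = 4$. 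The correct rescaling is $L' = 4L/f''(1)$ (as in the paper), which yields $dL' = 2\sqrt{L'}\,dW + 4\,dt$, i.e.\ $\mathrm{BESQ}^4$. Note that your own Feller computation betrays the inconsistency: you correctly find scale density $s'(y)=y^{-2}$, but for $\mathrm{BESQ}^\delta$ the scale density is $y^{-\delta/2}$, so $y^{-2}$ corresponds to $\delta=4$, not $\delta=2$. Moreover $\int_{0^+} y^{-2}\,dy$ diverges like $1/y$, not logarithmically; the ``borderline logarithmic'' behaviour you describe would occur only at $\delta=2$.

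The conclusion survives because $4 \ge 2$ and $\mathrm{BESQ}^\delta$ avoids the origin for all $\delta\ge 2$, but your framing of the proof around the claim that ``the process sits exactly at the critical dimension'' is false and would mislead a reader into thinking a delicate borderline analysis is required when none is. Your remarks on weak versus strong solutions and Yamada--Watanabe uniqueness are a reasonable supplement that the paper leaves implicit.
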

\begin{proof}
Let $L(t) = f''(1) L'(t)/4$, then 
$$
dL'(t) = 4dt + 2\sqrt{L'(t)}dW(t).
$$
This is an $n=4$ squared Bessel process which is strictly positive for all positive $t$ \cite{RevuzMartingales}.\footnote{The squared Bessel process governs the distance from the origin of a Brownian walk in $d=n+1$ dimensions that starts at the origin. The process $L(t)$ thus describes the segment of such a walk from the time that it first reaches $L_0$. Brownian walks in $d>2$ dimensions are non-recurrent so the walk never revisits the origin and $L(t)$ is strictly positive.}
\end{proof}

\subsection{Classical and quantum scaling exponents in CDT}
In this section we give another proof, in the spirit of the Duplantier-Sheffield construction, that scaling exponents on causal random geometry do not shift according to a KPZ-like relation. 
In the same way that DS used the volume term in the Liouville action to define a random measure $d\mu
= e^{\gamma\phi}d^2z$ we can define a random measure for continuum CDTs as $d\mu = L(t)dtdx$, where $L(t)$ is a \lpfull 
~(\ref{eq:the_sde}). The spatial direction is uniform so we may write the two-dimensional measure as $d\mu =
L(t)dt$, since the base space is $[0,1]\times[0,1]$. It is clear that (\ref{eq:the_sde}) is Lipschitz continuous on the domain $(0,\infty)$ and so has a continuous strong solution.
The solution $L(t)$ is a proper function of $t$ so no regularization of distributions is required in contrast to the case with the Liouville measure.

We are now in a position to show that the KPZ formula does not apply to CDTs in the continuum and in fact there is no shift
in the scaling dimensions of fields on CDTs compared to a fixed lattice (up to logarithmic corrections). It will be convenient to rewrite the definitions of the scaling exponents (\ref{eq:classical_exponent}--\ref{eq:quantum_exponent}) in a discrete form. First, consider dividing the base space $[0,1]\times[0,1]$ in two ways:
\begin{enumerate}
\item Into base cells $\mathcal B = \{B_\alpha,\,\alpha = 1,...,1/\varepsilon^2\}$ of classical area $\varepsilon^2$, $\varepsilon \ll 1$. 
\item Into quantum cells $\{C_\alpha,\,\alpha=1,...,S(1)/\delta \}$ of quantum area $\delta$, i.e. such that $\mu(C_\alpha) = \delta$. Let $\mathcal B_\alpha$ denote the set of base cells in $C_\alpha$ and $K_\alpha = |\mathcal B_\alpha|$.
\end{enumerate}
\begin{figure}
  \begin{center}
    \includegraphics[width=0.6\textwidth]{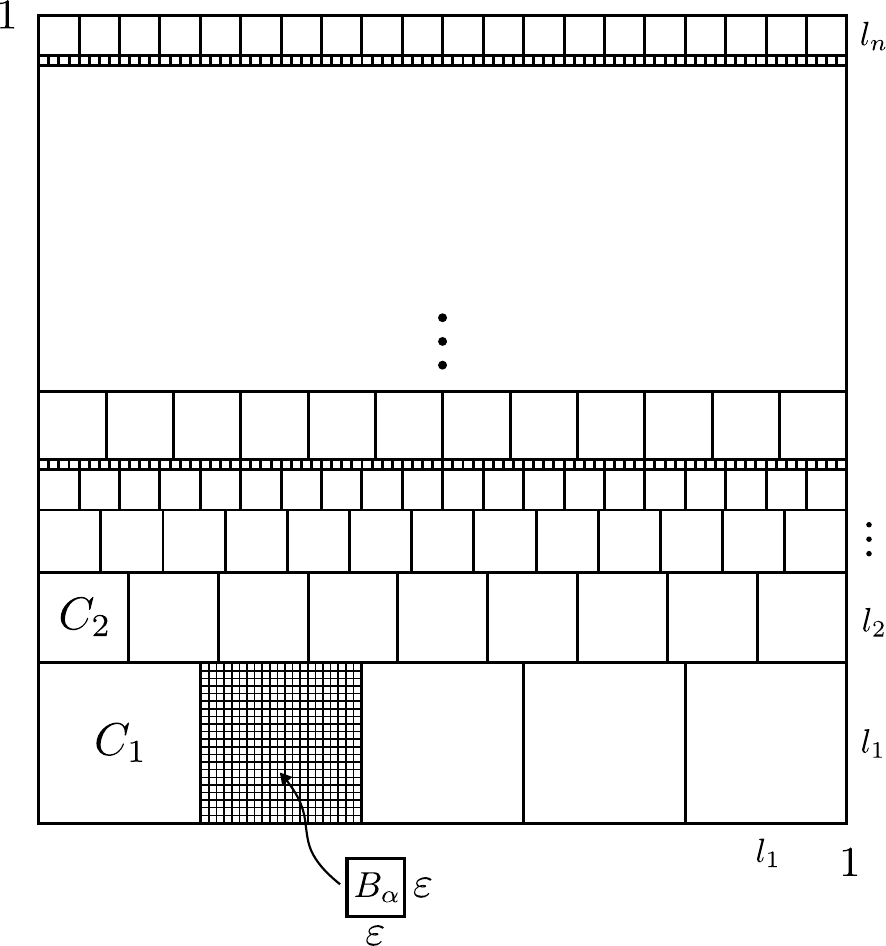}
  \end{center}
  \caption{A decomposition of the base space $[0,1]\times[0,1]$ according to the measure $\mu$.}\label{fig:cdt_dyadic}
\end{figure}

Note that the decomposition according to the measure $\mu$ is much simpler than that of Liouville gravity. Due to the one--dimensional nature, the base space is divided into rows $i = 1, ..., n$, where each quantum cell in a row is the same size in the base space and has the same quantum area -- see figure \ref{fig:cdt_dyadic}. We use Greek subscripts when referring to an element of the complete set of cells and Latin subscripts to label the height of a cell. For example, $C_i$ is a quantum cell at a height $i$ in the base space -- there is no need to distinguish the cell in the row, since they are all copies of each other.

\newpage
Let $X$ denote a random subset of $[0,1]\times[0,1]$ and $\mathcal X = \{b \in \mathcal B: b\,\cap\, X \neq \emptyset\}$ denote the set of base cells that intersect $X$.
\begin{definition}[Discrete Euclidean and quantum scaling exponents] ~
\begin{itemize}
    \item The \textbf{Euclidean scaling exponent} $x = x(X)$ is defined as
        \begin{equation}
          \label{eq:c_exp_disc}
            x(X) := \lim_{\varepsilon \to 0} \frac{\log \mathbb{E}_X[\varepsilon^2 N(\varepsilon, X)]}{\log \varepsilon^2}
        \end{equation}
        where $N(\varepsilon, X)$ is the number of base cells that intersect X.
    \item The \textbf{quantum scaling exponent} $\Delta = \Delta(X)$ is defined as
        \begin{equation}
            \label{eq:q_exp_disc}
            \Delta(X) := \lim_{\delta \to 0} \frac{\log \mathbb{E}_X[\delta N_Q(\delta, X)]}{\log \delta}
        \end{equation}
        where $N_Q(\delta, X)$ is the number of quantum cells that intersect X.
\end{itemize}
The expectation in both cases is over the ensemble of random subsets $X$.
\end{definition}
\begin{theorem}
The quantum scaling dimension $\Delta(X)$ and the classical scaling dimension $x(X)$ are equal for the measure $d\mu = L(t)dt$,
where $L(t)$ is a \lp.
\end{theorem}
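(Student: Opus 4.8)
The plan is to exploit that, unlike the Liouville measure $e^{\gamma\phi}\,d^2z$ — a singular Gaussian multiplicative chaos built from the distribution-valued GFF — the CDT measure $d\mu=L(t)\,dt\,dx$ has an honest continuous function for its density, so that $\mu$ is comparable to Lebesgue measure on $[0,1]\times[0,1]$. First I would produce, for almost every realisation of the \lp, random constants $0<\ell_{\min}\le\ell_{\max}<\infty$ with $\ell_{\min}\le L(t)\le\ell_{\max}$ for all $t\in[0,1]$. This is precisely where the earlier results are used: the SDE (\ref{eq:the_sde}) is Lipschitz on $(0,\infty)$ and hence has a continuous strong solution, so $L$ restricted to the compact interval $[0,1]$ is continuous, while Lemma \ref{lem:LPositive} gives $L(t)>0$ throughout; hence $\ell_{\max}=\max_{[0,1]}L<\infty$ and $\ell_{\min}=\min_{[0,1]}L>0$ almost surely (using $L(0)=L_0>0$, the loop boundary condition relevant here; the point boundary condition $L(0)=0$ needs only a minor separate treatment of the strip near $t=0$, which carries a vanishing share of the quantum area). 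It follows immediately that $\ell_{\min}\,dt\,dx\le d\mu\le\ell_{\max}\,dt\,dx$.

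Next I would convert this two-sided bound into a comparison of the two cell decompositions. A quantum cell of quantum area $\delta$ in the row at height $t_i$ is a Euclidean square of side $\sqrt{\delta/L(t_i)}$, which lies in $[\sqrt{\delta/\ell_{\max}},\sqrt{\delta/\ell_{\min}}]$; so every quantum cell is a square of Euclidean side comparable to $\sqrt\delta$ (with harmless rounding so the rows still tile $[0,1]^2$). Choosing the Euclidean scale $\varepsilon=\sqrt\delta$, the base cells are squares of side $\varepsilon$, and the two tilings are mutually comparable: there is a constant $\kappa=\kappa(\ell_{\min},\ell_{\max})$ such that each quantum cell meets at most $\kappa$ base cells and each base cell meets at most $\kappa$ quantum cells. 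A one-line covering argument — locate a point of $X$ in a cell of the other tiling — then yields $\kappa^{-1}N(\varepsilon,X)\le N_Q(\delta,X)\le\kappa\,N(\varepsilon,X)$ for every random subset $X$, so $\delta\,N_Q(\delta,X)$ and $\varepsilon^2 N(\varepsilon,X)$ agree up to the bounded factor $\kappa$, recalling $\delta=\varepsilon^2$.

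Finally I would take logarithms in (\ref{eq:q_exp_disc}) and (\ref{eq:c_exp_disc}), divide by $\log\delta=\log\varepsilon^2$, and send $\delta\to0$: the bounded prefactor contributes $O(1/\log\delta)\to0$, so $\Delta(X)=x(X)$. The constants $\ell_{\min},\ell_{\max},\kappa$ being random but almost surely finite and positive is harmless for the same reason — they affect only the subleading prefactor (at worst a logarithmic correction), never the exponent. I expect the real work to sit entirely in the first step: one must be sure the solution of (\ref{eq:the_sde}) is a genuine continuous, strictly positive function on the whole closed interval. Continuity is exactly what removes the distributional pathologies that make Liouville KPZ nonlinear, while strict positivity — the non-recurrence of the $n=4$ squared Bessel process invoked in Lemma \ref{lem:LPositive} — is what keeps the density, hence the ratio of the two measures, bounded away from zero; with that in hand the remainder is the elementary box-counting comparison above.
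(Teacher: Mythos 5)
Your proof is correct and follows the same basic strategy as the paper --- show that a quantum cell of quantum area $\delta$ is a Euclidean square of side comparable to $\sqrt{\delta}$, then match the two box counts at the scale $\varepsilon^2=\delta$ --- but the two technical ingredients you use are genuinely different and worth recording. Where the paper invokes the exact scaling law of the area process (Lemma \ref{lem:area}), writing $\delta=\ell_i^2 S_i$ with $S_i$ distributed as $S(1)$ and controlling the number of rows $n(\delta)$ through strict positivity of $S_i$ and finiteness of $\mathbb{E}[S_i^{-1/2}]$, you extract almost-sure uniform bounds $0<\ell_{\min}\le L\le\ell_{\max}<\infty$ from the continuity of the strong solution of \eqref{eq:the_sde} together with Lemma \ref{lem:LPositive} on the compact time interval; this is cruder (your $\kappa$ is a random constant with no moment control, so it would not immediately survive an expectation over the geometry, which the paper's $\mathbb{E}[S_i^{-1/2}]<\infty$ condition is designed to handle) but it is entirely sufficient for the quenched, realization-by-realization statement being proved. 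More significantly, your counting step is a deterministic covering comparison $\kappa^{-1}N(\varepsilon,X)\le N_Q(\delta,X)\le\kappa\,N(\varepsilon,X)$ valid for every $X$, whereas the paper estimates $\mathbb{P}[C_i\cap X\neq\emptyset]$ via $1-(1-p_X(\varepsilon))^{K_i}$, which tacitly assumes that the events of $X$ meeting distinct base cells are independent; your route avoids that assumption and is the more general of the two. Both arguments isolate the same essential point --- continuity and strict positivity of $L$ make $\mu$ mutually absolutely continuous with Lebesgue measure with density bounded away from $0$ and $\infty$, which is precisely what fails for the Liouville measure --- and your closing remark correctly identifies that all the real analytic content sits in that first step.
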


\begin{proof}
By Lemma 4.5, a quantum cell $C_i$ (see figure 8) at height $i$ and of quantum area $\delta$ covers a square in the base space of side length $\ell_i$, where $\delta = \ell_i^2 S_i$ and $S_i$ is sampled from the law $S{(1)}$. Define  the
integer $n(\delta)$  by
\begin{equation} n(\delta)=\max \{n: \sum_{i=1}^{n} l_i < 1\} \end{equation}
As $S_i$ is strictly positive and $\mathbb{E}[S_i^{-\frac{1}{2}}]$ is finite, this implies that, as $\delta \to 0$, we have $n(\delta)^2\delta = \mathcal{O}(1)$.

Let $p_X(\varepsilon) = \mathbb P[B_\alpha \cap X \neq \emptyset]$ be the \textit{a priori} probability that $X$ intersects a given base cell $B_\alpha$.
The probability $\mathbb{P}(C_i \cap X \neq \emptyset)$ that a given quarter cell $C_i$ intersects the subset $X$   is  given by
\begin{align}
  \mathbb{P}[C_i \cap X \neq \emptyset] &= \mathbb{P}[\mathcal B_i \cap \mathcal X \neq \emptyset],\nonumber \\
                                        &= 1 -\mathbb{P}[\mathcal B_i \cap \mathcal X  = \emptyset]\,.
\end{align}
Letting
\begin{align}  K_i^+=\left\lceil\frac{l_i}{\varepsilon}\right\rceil\,,\;\mathrm{and}\;K_i^-=\left\lfloor\frac{l_i}{\varepsilon}\right\rfloor\,,
\end{align}
we have 
\begin{equation}
1 - (1 - p_X(\varepsilon))^{K_i^-} \leq \mathbb{P}(C_i \cap X \neq \emptyset) \leq 1 - (1 - p_X(\varepsilon))^{K_i^+}\,.\label{eqn:A4}
 \end{equation}
The quantity $l_i^{-1}$ is not generally an integer, but we note that the quantum area, $A_i$, of the cells $C_i$ satisfies
\begin{equation}
\delta\left\lfloor\frac{1}{l_i}\right\rfloor \le A_i <  \delta\left\lceil\frac{1}{l_i}\right\rceil\,.\label{eqn:A5}
 \end{equation}
Then, using \eqref{eqn:A4} and \eqref{eqn:A5}, we find that the expectation of the area of quantum cells that intersect $X$ is bounded above by
\begin{equation}
\mathbb{E}_X\delta N_Q(\delta, X)< \sum_{i=1}^{n(\delta)+1} \delta\left\lceil\frac{1}{l_i}\right\rceil(1 - (1 - p_X(\varepsilon))^{K_i^+}),
 \end{equation}
 and below by
\begin{equation}
 \sum_{i=1}^{n(\delta)} \delta\left\lfloor\frac{1}{l_i}\right\rfloor(1 - (1 - p_X(\varepsilon))^{K_i^-})<\mathbb{E}_X\delta N_Q(\delta, X)\,.
 \end{equation}
Finally note that
\begin{equation}
p_X(\varepsilon) = \varepsilon^2 N(\varepsilon,X)\,, \end{equation}
and set $\delta = K  \varepsilon^2$, where $K\gg1$ is a fixed number. Then the quantum dimension converges to 
\begin{align}
 \Delta(X) &= \lim_{\varepsilon \to 0} \frac{\log\mathbb{E}_X\left[\sum_{i=1}^{n(K\varepsilon^2)} \frac{K\varepsilon^2}{l_i}\left(\frac{l_i^2}{\varepsilon^2}+O\left( \frac{l_i}{\varepsilon}\right)\right) \varepsilon^2 N(\varepsilon,X)\right]} {\log K\varepsilon^2}\nonumber\\
 		&=\lim_{\varepsilon \to 0} \frac{\log(K+O\sqrt{K})+ \log \mathbb{E}_X[\varepsilon^2 N(\varepsilon,X)]}{\log K\varepsilon^2}\nonumber\\
		&=x(X)\,,
\end{align}
where we have used the fact that
\begin{align}
 \lim_{\varepsilon \to 0} \sum_i^{n(K\varepsilon^2)} l_i = 1.
\end{align}
\end{proof}

\noindent
We remark that this result is an inevitable outcome of the continuous, one dimensional nature of the measure.

\section{Connection to Ho\v rava-Lifshitz gravity}
\label{sec:horava}
We note another interesting connection to projectable Ho\v rava-Lifshitz (HL) gravity. The action for
projectable HL gravity can be reduced to a one-dimensional action of the form \cite{Ambj_rn_2013}
\begin{equation}
\label{eq:hl_action}
    S_E = \int dt \left( \frac{\dot{L}(t)^2}{4L(t)} + \Lambda L(t)\right).
\end{equation}
We will show that this is related to the \lpfull.

The Onsager-Machlup (OA) function allows us to write down a Lagrangian associated to any SDE \cite{onsager_machlup, bach_hors1975, Weber_2017}. In general, for an It\^o process
\begin{equation}
    dX_t = f(X_t)dt + \sigma(X_t)dW_t
\end{equation}
the OA function is given by 
\begin{equation}
    \mathcal{L}(\dot{x}, x) = \frac{(\dot{x} - f(x))^2}{2\sigma(x)^2}.
\end{equation}
For the action \eqref{eq:hl_action}, the OA function can be read off: 
\begin{equation}
\label{eq:oa_hl}
    \mathcal{L}(\dot{L},L) = \frac{\dot{L}^2}{4L}.
\end{equation}
This is associated with the stochastic process satisfying the SDE:
\begin{equation}
    \label{eq:sde_no_drift}
    dL_t = \sqrt{2 L_t}\,dW_t.
\end{equation}
The associated formal path integral to the OA Lagrangian is
\begin{equation}
    Z = \int \mathcal{D}L e^{-\int \mathcal{L}(\dot{L},L) dt}.
\end{equation}
As is standard procedure in the probabilistic approach to quantum field theory (see \cite{Simon_1975}), we interpret the term 
\begin{equation}
\label{eq:sde_measure}
    Z^{-1}e^{-\int \mathcal{L}(\dot{L},L) dt}\mathcal{D}L
\end{equation}
as the measure which properly weights the paths given by the stochastic process. 

The process in \eqref{eq:sde_no_drift} is the \lpfull~without the constant drift term. 
We now show that one can change measure to include the drift term and that the Radon-Nikodym derivative has a very simple and suggestive form.

\begin{theorem}
\label{thm:change_measure}
Let $\mathbb{P}$ be the measure associated with the process defined by \eqref{eq:sde_no_drift} and let $\widetilde{\mathbb{P}}$ be a new measure defined by the Radon-Nikodym derivative 
\[
\frac{d\widetilde{\mathbb{P}}(t)}{d\mathbb{P}(t)} = \frac{L(t)}{L(0)}
\]
then the process $L_t$ satisfies
\[
dL_t = 2 dt + \sqrt{2 L_t}\,d\widetilde{W}_t,
\]
where 
\[
\widetilde{W}_t = W_t - \int_0^t \frac{2}{\sqrt{2L_t}}ds,
\]
is standard Brownian motion under $\widetilde{\mathbb{P}}$.
\end{theorem}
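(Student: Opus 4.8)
This statement is a Girsanov change of measure which, read conceptually, realizes the classical Doob $h$-transform sending a squared Bessel process of dimension $0$ to one of dimension $4$; the plan is to make this precise in three steps, the only genuine care being needed around the absorption of $L$ at the origin. First I would identify the density process: under $\mathbb{P}$ one has $dL_t = \sqrt{2L_t}\,dW_t$, so $2L_t$ is a squared Bessel process of dimension $0$ (a Feller diffusion); in particular $L_t \ge 0$, it has explicit moments, and $\mathbb{E}^{\mathbb{P}}[L_t] = L_0$ for all $t$ (a standard fact for squared Bessel processes, e.g.\ from the explicit Laplace transform). Hence $M_t := L(t)/L(0)$ is a nonnegative local martingale with $M_0 = 1$ and constant expectation, therefore a true $\mathbb{P}$-martingale, and setting $d\widetilde{\mathbb{P}}|_{\mathcal{F}_t} = M_t\,d\mathbb{P}|_{\mathcal{F}_t}$ defines a consistent family of probability measures on $(\mathcal{F}_t)_{t\ge0}$. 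An Itô computation gives $dM_t = L_0^{-1}\sqrt{2L_t}\,dW_t = M_t\,\theta_t\,dW_t$ with $\theta_t := \sqrt{2/L_t} = 2/\sqrt{2L_t}$, so $M = \mathcal{E}\!\left(\int_0^{\cdot}\theta_s\,dW_s\right)$ up to the absorption time $\tau_0 := \inf\{s : L_s = 0\}$.

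The second step handles the origin and invokes Girsanov. The subtle point is that $\theta_t$ blows up at $\tau_0$, so a priori the stochastic exponential is only defined for $t < \tau_0$. This causes no trouble under the new measure: since $0$ is absorbing for $2L$, on $\{\tau_0 \le t\}$ one has $L_t = 0$, hence $M_t = 0$, and therefore $\widetilde{\mathbb{P}}(\tau_0 \le t) = \mathbb{E}^{\mathbb{P}}[M_t\,\mathbf{1}_{\{\tau_0 \le t\}}] = 0$; thus $\widetilde{\mathbb{P}}(\tau_0 = \infty) = 1$ and the drift integral below is finite $\widetilde{\mathbb{P}}$-a.s. (Note $M$ is not uniformly integrable, $M_\infty = 0$ $\mathbb{P}$-a.s., so $\widetilde{\mathbb{P}}$ does not extend to $\mathcal{F}_\infty$ --- exactly what one expects of a transform that conditions the path to avoid the origin.) Applying Girsanov's theorem on $[0,\tau_0)$ and then globally under $\widetilde{\mathbb{P}}$ yields that $\widetilde{W}_t := W_t - \int_0^t \theta_s\,ds = W_t - \int_0^t \frac{2}{\sqrt{2L_s}}\,ds$ is a standard Brownian motion under $\widetilde{\mathbb{P}}$; using the true-martingale property of $M$ from Step~1 lets us bypass Novikov's criterion, which would fail here since $\int_0^t L_s^{-1}\,ds$ has no finite exponential moments.

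The third step is purely algebraic: substituting $dW_t = d\widetilde{W}_t + \theta_t\,dt$ into $dL_t = \sqrt{2L_t}\,dW_t$ gives $dL_t = \sqrt{2L_t}\,d\widetilde{W}_t + \sqrt{2L_t}\cdot\sqrt{2/L_t}\,dt = 2\,dt + \sqrt{2L_t}\,d\widetilde{W}_t$, which is precisely the asserted dynamics --- the Lamperti--Ney equation of Theorem~\ref{thm:sde} with $f''(1) = 2$, and $\widetilde{\mathbb{P}}$-a.s.\ the process now never hits $0$, consistent with Lemma~\ref{lem:LPositive}.

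The main obstacle is thus the careful bookkeeping around $\tau_0$ in the second step; everything else is the standard Girsanov mechanism and one line of Itô algebra. As a sanity check one can verify directly that $x \mapsto x$ is harmonic for the generator $\mathcal{G} = 2x\,\partial_x^2$ of the $BESQ^0$ process $2L$, and that the Doob $h$-transformed generator $\mathcal{G}^h f := x^{-1}\mathcal{G}(xf) = 2x f'' + 4 f'$ is that of a $BESQ^4$ process, i.e.\ of the Lamperti--Ney process --- which is another way to see why the Radon--Nikodym derivative must be proportional to $L(t)$.
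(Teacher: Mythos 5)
Your proposal is correct and follows essentially the same route as the paper: a standard application of Girsanov's theorem combined with the It\^o computation $d(\log L_t) = \frac{2}{\sqrt{2L_t}}\,dW_t - \frac{1}{L_t}\,dt$ identifying the stochastic exponential with $L(t)/L(0)$ --- you simply read this identification in the opposite direction, starting from the candidate density $M_t = L(t)/L(0)$ and recognizing it as $\mathcal{E}\bigl(\int_0^{\cdot}\theta_s\,dW_s\bigr)$. The additional care you take --- establishing that $M_t$ is a true martingale via the constant expectation of the driftless (squared Bessel dimension $0$) process rather than Novikov, and handling the absorption time $\tau_0$ where $\theta_t = 2/\sqrt{2L_t}$ blows up --- addresses points the paper's proof leaves implicit, and is a strengthening of the same argument rather than a different method.
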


\begin{proof}
The proof is a standard application of Girsanov's theorem. We first write the process \eqref{eq:sde_no_drift} as 
\[
dL_t = 2dt + \sqrt{2L_t}\left(dW_t - \frac{2}{\sqrt{2L_t}}dt\right).
\]
Girsanov's theorem states that the process 
\[
\widetilde{W}_t = W_t - \int_0^t \frac{2}{\sqrt{2L_t}}ds,
\]
will be standard Brownian motion under the measure $\widetilde{\mathbb{P}}$ defined by the Radon-Nikodym derivative 
\[
\frac{d\widetilde{\mathbb{P}}(t)}{d\mathbb{P}(t)} = \exp\left\{\int_0^t\frac{2}{\sqrt{2L(s)}}dW(s) - 
\int_0^t \frac{1}{L(s)}ds \right\}.
\]
A straightforward application of It\^o's lemma shows that
\[
d(\log L(t)) = \frac{2}{\sqrt{2L(s)}}dW(s) -  \frac{1}{L(s)}ds,
\]
hence 
\[
\log L(t) - \log L(0) = \int_0^t\frac{2}{\sqrt{2L(s)}}dW(s) - 
\int_0^t \frac{1}{L(s)}ds,
\]
which proves the result.
\end{proof}

Under the change of measure defined in Theorem \ref{thm:change_measure}, expectations of random variables in the measure $\widetilde{\mathbb{P}}$ are given by $\widetilde{\mathbb{E}}[X] = \mathbb{E}[\frac{L(t)}{L(0)}X]$.

We are often interested in adding potential terms, such as a cosmological constant term,
to the Lagrangian. These are then considered as observables whose expectation
is being calculated with respect to the measure (\ref{eq:sde_measure}). For example, the partition function including a cosmological constant term is given by
\begin{equation}
    \int \mathcal{D}L\,e^{-\int \mathcal{L}(\dot{L(t)},L(t)) + \Lambda L(t)\,dt} = \mathbb{E}[e^{-\Lambda \int L(t)\,dt}]\,,
\end{equation}
where $\mathbb{E}$ is over paths sampled according to an SDE. For the \lp,~this is indeed the correct cosmological constant term since $\int L(t)\,dt$ is the two dimensional volume.
The expectation value of a general observable $F(L)$ is given by 
\begin{equation}
    \label{eq:expectation}
    \int \mathcal{D}L\,F(L(t)) e^{-\int \mathcal{L}(\dot{L},L)dt} = \mathbb{E}[F(L(t))]\,
\end{equation}
which may remind the reader of the expression that
satisfies the 
Feynman-Kac formula. Indeed, if we specify an initial condition, say $L_0 = l$, then $\phi(l,\tau) = \widetilde{\mathbb{E}}[\,G(L(t))e^{-\Lambda
\int_0^\tau L(t)\,dt}\, |\, L_0 = l\,]$ satisfies 
the differential equation 
\begin{equation}
  \label{eqn:feyn_kac_hamiltonian}
    -\frac{\partial\phi}{\partial\tau} = -l\frac{\partial^2\phi}{\partial l^2} - 2\frac{\partial\phi}{\partial l} + \Lambda l\phi;
    ~~~ \phi(l,0) = G(l),
\end{equation}
where the expectation is taken with respect to the measure $\widetilde{\mathbb{P}}$.
This is exactly the imaginary time Schr\"odinger equation with the CDT Hamiltonian \cite{Ambj_rn_1998,sisko2011note}. However, as discussed in \cite{Ambj_rn_2013}, there is an ambiguity in the operator ordering of the Hamiltonian when it is quantized. 
The Hamiltonian is given by $H = L\Pi^2 +
\Lambda L$, where $\Pi = -i \frac{\partial}{\partial L}$. The Hamiltonian in \eqref{eqn:feyn_kac_hamiltonian} corresponds to the ordering $H = -\frac{\partial^2}{\partial L^2}L + \Lambda L$, which is Hermitian on the space of square integrable functions with the measure $L\,dL$. This factor of $L$ has its origin as the Radon-Nikodym derivative of changing from a process $L_t$ without drift to one with a drift term $2\,dt$.
It also corresponds to the discrete model where both entrance and exit loops are unmarked \cite{Ambj_rn_2013}. 

These results further corroborate the findings that 2D continuum CDT is 2D Projectable Ho\v
rava-Lifshitz gravity, this time showing the correspondence from the other direction. 

\section{Outlook and extension to the annealed case}
\label{sec:annealed}
In the quenched model, where a graph is first sampled from the UICT ensemble, and matter fields are subsequently placed on it, we have demonstrated that there is no change in the critical exponents compared to the flat lattice for any height model.
In the particular case of the Ising model, Theorem \ref{lem:ScalingDims} corroborates the numerical results in \cite{Ambj_rn_1999, Ambj_rn_2009, Benedetti_2007}.
\textit{If} a critical point exists, the topological defects and their associated fusion algebra persist in the continuum, being independent of any coupling constant. The continuity of the stochastic process $L(t)$, guarantees the existence of the Dehn twist in the continuum. Hence, the discrete arguments apply, showing that the conformal dimensions of the Ising operators are unchanged from their Onsager values.

This is in contrast to the KPZ relation in Liouville gravity, where even in the pure gravity case ($\gamma = \sqrt{8/3}$) we get a non-trivial relation between $x(X)$ and $\Delta(X)$.
One way to explain this difference between causal and Euclidean random geometry is that there is no clear way to define a Dehn twist in the Euclidean setting. In the discrete Euclidean picture, the set of vertices at a constant geodesic distance from a chosen origin is almost surely disconnected. The Dehn twist cannot be defined because
there is no curve around which there exists a region homeomorphic to $S^1 \times [0,1]$, due to the fractal nature of the space. All that is to say, the unique properties of causal random geometry allow us to readily extend the arguments in \cite{Aasen_2016, aasen2020topological} to CTs but not to Euclidean triangulations. 

In the \textit{annealed model}, which amounts to sampling a graph $C \in \mathcal C_\infty$ according to the measure $\mu_J(C)$, we do not know the exact form of the process that describes the evolution of the random
geometry. However, we do know that it must be described by some one--dimensional stochastic process $X(t)$, due to the restriction to causal graphs with a global time foliation. It follows that, to extend our arguments to the annealed model, it would be sufficient to prove the continuity of $X(t)$. This argument applies to any coupling of unitary matter to CDT if a second order phase transition,  where a continuum limit can be defined, exists. 

Another interesting observation comes from the functional renormalization group equation (FRGE) analysis of the matrix model representation of CDT \cite{castro2020renormalization}. The authors find that the anomalous dimension vanishes due to the presence of the matrices $C$ (not to be confused with the causal graph) that impose the causal structure in the ribbon graphs. 
Since it is exactly the anomalous dimension that shifts the scaling exponents, it follows that there should be no KPZ--like relation. The same analysis of the Ising--CDT matrix model defined in \cite{Abranches} comes to the same conclusion \cite{Barouki:2025gky}. In fact, it is clear that any matrix model description of CDT coupled to matter that uses these $C$ matrices to impose the causal constraint will contain vanishing anomalous dimensions.

\acknowledgments
We thank Davide Laurenzano for insightful comments on the matrix model representation. We would also like to thank Isaac Layton for useful discussions and for introducing the Onsager-Machlup function. Ryan Barouki is supported by STFC studentship  ST/W507726/1    
and
Henry Stubbs is supported by STFC studentship  ST/X508664/1.  
For the purpose of open access, the authors have applied a CC BY public copyright
licence to any Author Accepted Manuscript (AAM) version arising from this submission.

\appendix
\section{Defects at the boundary}
\label{app:boundary_defects}
One is free to choose boundary conditions as part of the definition of the partition function. Two choices are \textit{free} and \textit{fixed}. In the free case, the boundary spins are summed over in the partition function just like any other bulk spin.
In the fixed case, the boundary is defined by a state, $\ket{B} = \ket{...0001101011...}$ for example, which specifies the spin at
every boundary vertex.  We are interested in whether a defect is free to move at the boundary without altering the partition
function.
Figure \ref{fig:spin_defect_boundary} shows the boundary spins $a,b,c,d$ in the presence of a spin defect. Notice that the final
plaquettes are horizontal to signify that an edge joins the boundary points.
\begin{figure}[h]
\centering
\begin{equation*}
\includegraphics[scale=1.5, valign=c]{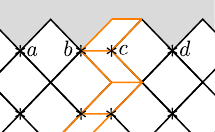}
\longrightarrow
\includegraphics[scale=1.5, valign=c]{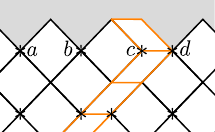}
\end{equation*}
\caption{Spin defect ending at the boundary (top) being moved one spin to the right. The boundary spins are those marked
$a,b,c,d$.} 
\label{fig:spin_defect_boundary}
\end{figure}

If the boundary conditions are free, then by the same bulk moves in (\ref{eq:defect_comm_1}--\ref{eq:defect_comm_2}), a spin defect move
at the boundary is also topological. On the other hand, suppose we fix the boundary such that $a = 0, b = 0, c = 1, d = 1$, then
before the move, the section of the boundary shown in figure \ref{fig:spin_defect_boundary} contributes a factor $W^H(u)_{00}
W^H(u)_{11} \times 1$, where the factor of 1 comes from the spin defect. After the move the contribution is $W^H(u)_{00}
W^H(u)_{10} \times 0$, where the factor of 0 also comes from the spin defect but with the same spin on each side of the plaquette.
Hence  this move does not preserve the partition function and is not topological. Moving the defect to the left is similarly not allowed. The defect can therefore be pinned by choosing a boundary condition in which all spins are $1$, except for two adjacent spins that are $0$.

\begin{figure}[h]
\centering
\begin{equation*}
\includegraphics[scale=0.25, valign=c]{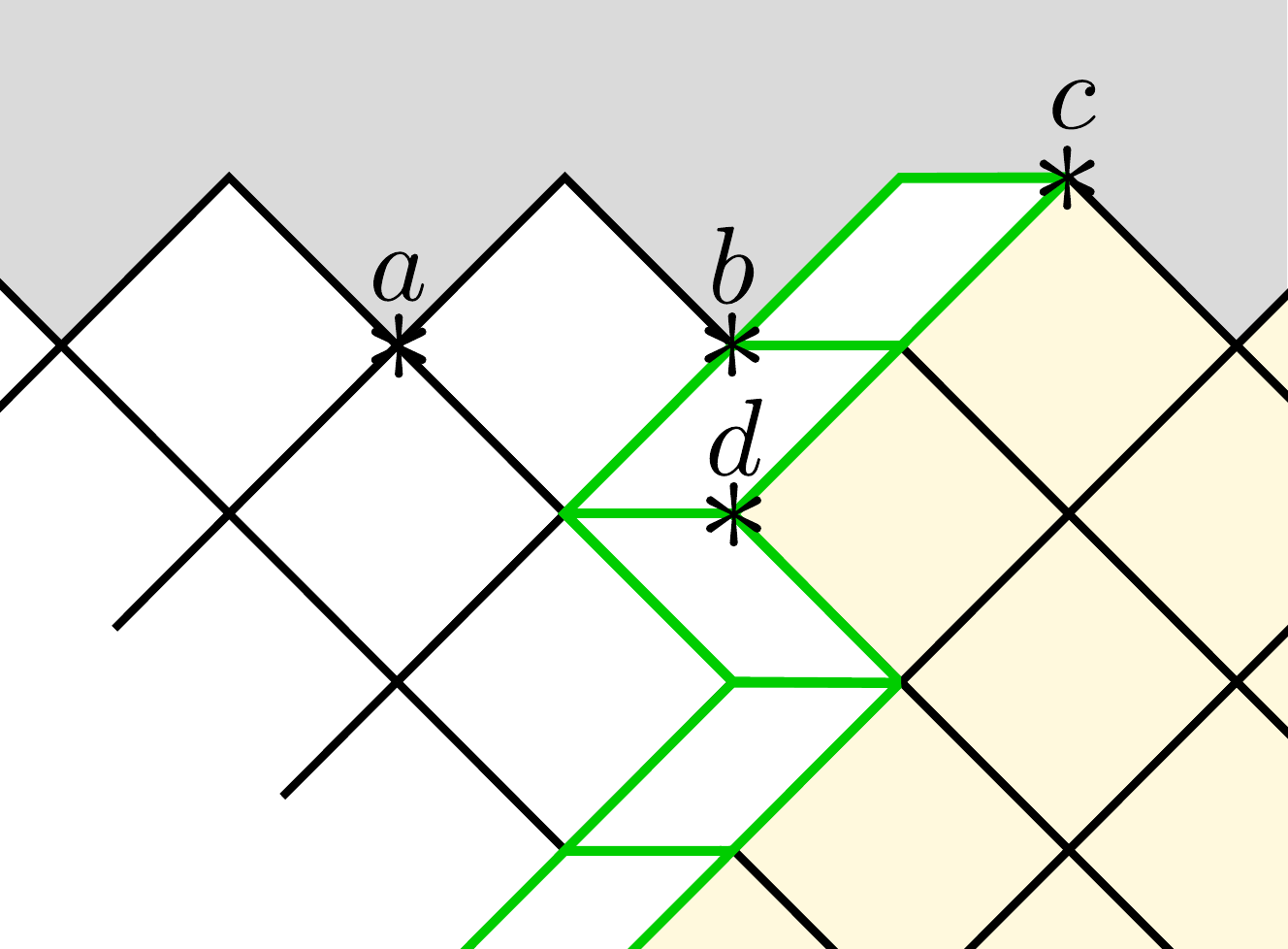}
\longrightarrow
\includegraphics[scale=0.25, valign=c]{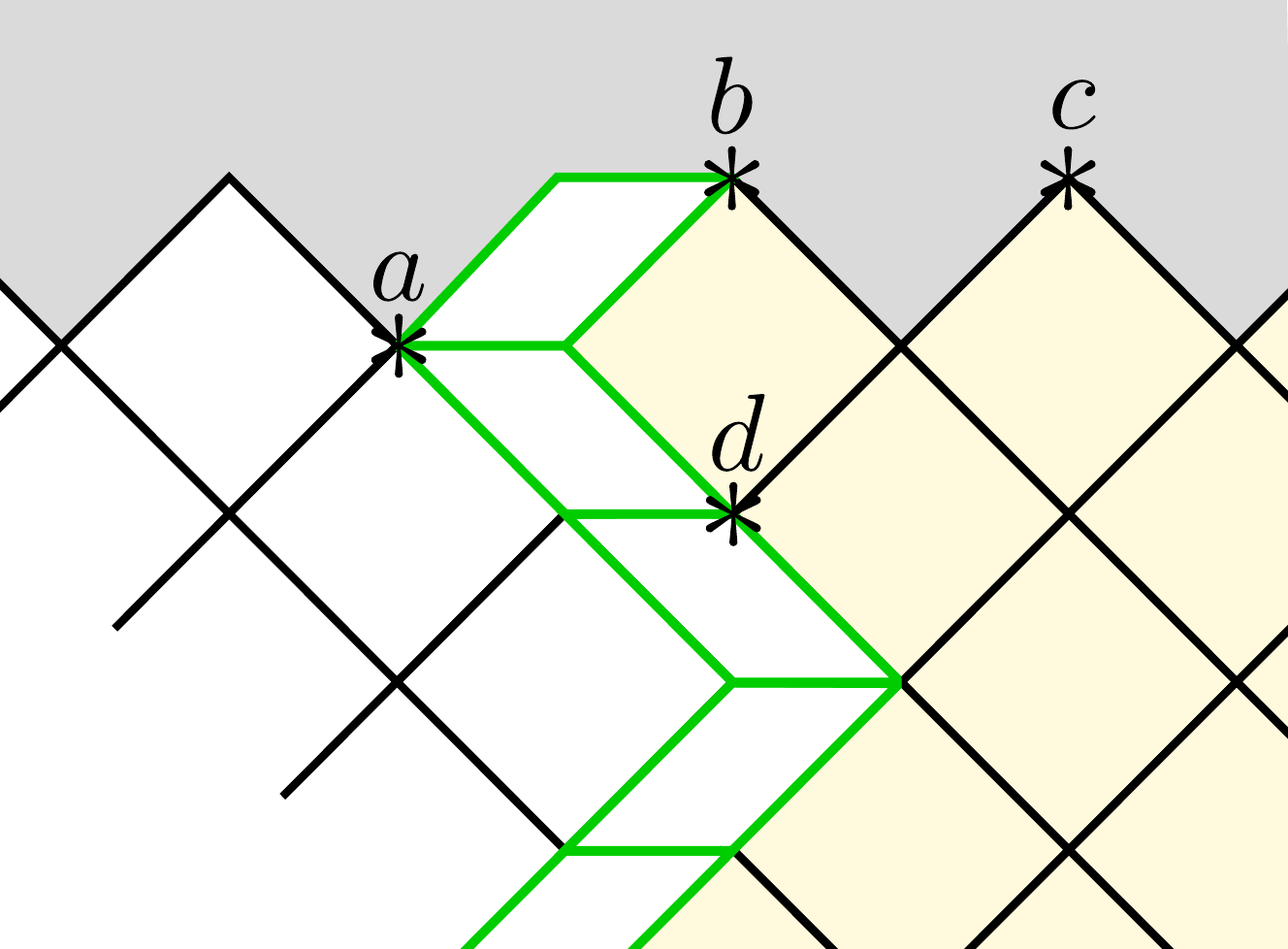}
\end{equation*}
\caption{Moving a duality defect at the boundary (top) one step to the left. The boundary spins are $a,b,c$. The yellow shading shows the difference between the dual and original plaquettes.}
\label{fig:duality_at_boundary}
\end{figure}

The case of the duality defect is slightly trickier -- consider the diagram in figure \ref{fig:duality_at_boundary}.
The left-hand side of the figure contributes a factor 
\begin{equation}
  W^H(u_H)_{ab}(-1)^{bd}(-1)^{bc},
  \label{eq:duality_boundary_lhs}
\end{equation}
whereas the right-hand side is given by
\begin{equation}
  W^V(u_H)_{bd}(-1)^{ad}(-1)^{ab}.
  \label{eq:duality_boundary_rhs}
\end{equation}
Now consider free boundary conditions. This amounts to summing over $b$ since this is the only internal spin in the
diagram. Evaluating $\sum_b$(\ref{eq:duality_boundary_lhs}) and $\sum_b$(\ref{eq:duality_boundary_rhs}), one finds that these can never be equal for all combinations of $a,d,c$.
Thus the duality defect cannot be moved along the boundary without changing the partition function for all values of $u_H, u_V$. 
A similar argument shows that for a magnetised boundary
with $a=b=c=0$ 
again 
the duality defect cannot be moved freely.

 We conclude  that it is possible to pin any defect to the boundary by choosing
suitable boundary conditions.

\section{The domain wall}
\label{app:wall}
As we alluded to in Section \ref{sec:top_ising}, when considering a vertical duality defect in a space with periodic boundary conditions, we
necessarily require a domain wall where the dual lattice meets the original again. In this appendix, we will elaborate on the
details of the wall and show that defects can pass through the wall topologically.

One way to implement the wall is to introduce a new plaquette that identifies the spins at the wall. 
This plaquette is defined as 

\begin{equation}
  \includegraphics[valign=c, scale=0.5]{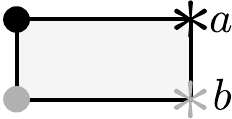} = \delta_{ab},
  \label{eq:wall_plaquette}
\end{equation}
and Figure \ref{fig:wall} shows how it is implemented at the interface of the original and dual lattices.

\begin{figure}
  \begin{center}
    \includegraphics[width=0.95\textwidth]{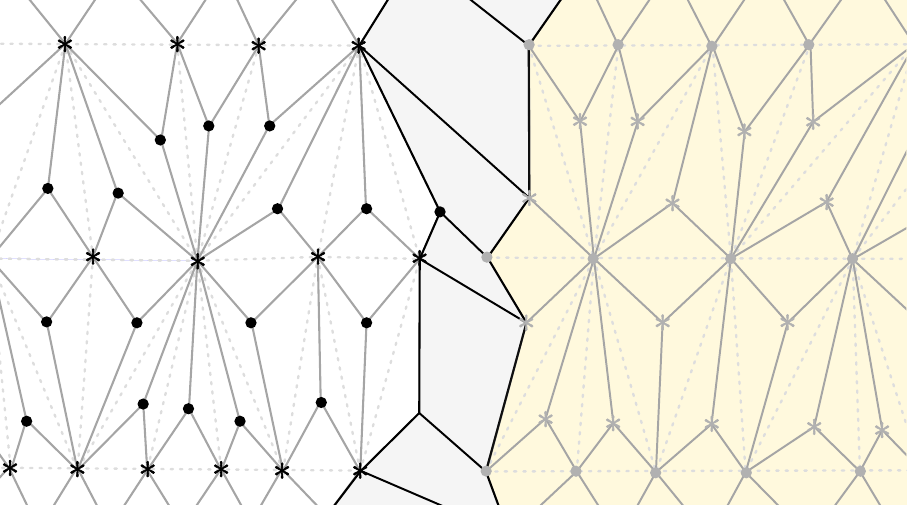}
  \end{center}
\caption{The domain wall where the original and dual lattice meet again in the presence of a vertical duality defect. Black asterisks indicate the positions of the spins on $G$, while gray asterisks mark the positions of the spins on $G^*$. Black dots denote empty positions in $G$, and gray dots denote empty positions in $G^*$. The dotted lines depict the underlying triangulation.}\label{fig:wall}
\end{figure}

Consider a horizontal sequence of duality defect plaquettes, such as those that constitute the Dehn twist in the presence of a
vertical duality defect. At some point, there will be an intersection of the horizontal plaquettes with the wall, as shown in
Figure \ref{fig:wall_duality}. Notice that at the intersection of the wall and the horizontal defect, there is a new type of object
\begin{equation}
  \includegraphics[valign=c, scale=0.5]{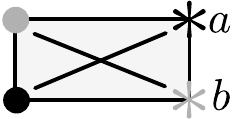} = \delta_{ab}
  \label{eq:wall_defect_plaquette}
\end{equation}
which is almost identical to (\ref{eq:wall_plaquette}) but differs with the relative positions of the original and dual empty sites. 
\begin{figure}
  \begin{center}
    \includegraphics[width=0.95\textwidth]{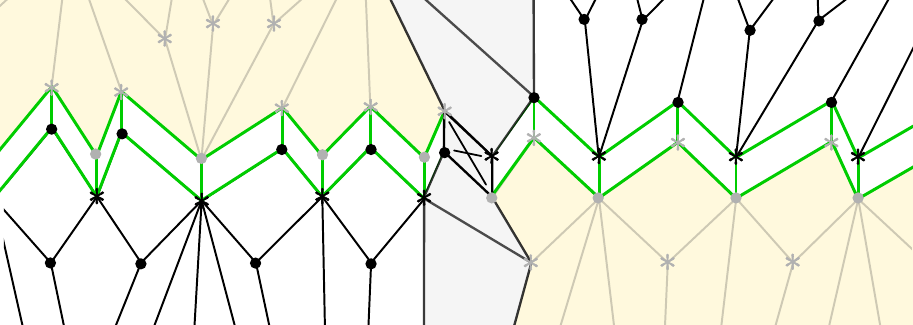}
  \end{center}
  \caption{The intersection of a horizontal duality defect and the domain wall.}\label{fig:wall_duality}
\end{figure}

The question is whether the horizontal defect can be moved along the wall and what happens when two
horizontal defects meet at the wall. To answer the first question, consider the following diagram
\begin{equation}
  \includegraphics[valign=c, width=0.3\textwidth]{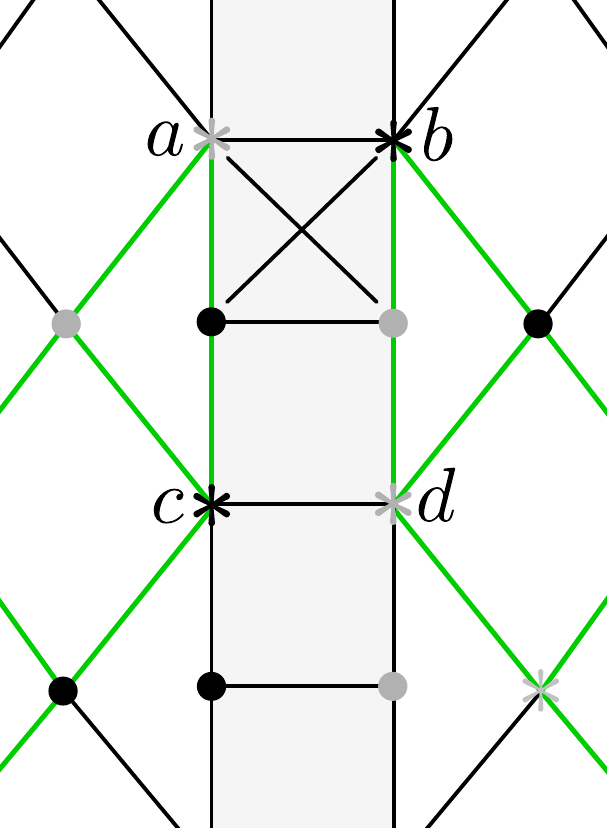}
  ~~~~
  \to
  ~~~~
  \includegraphics[valign=c, width=0.3\textwidth]{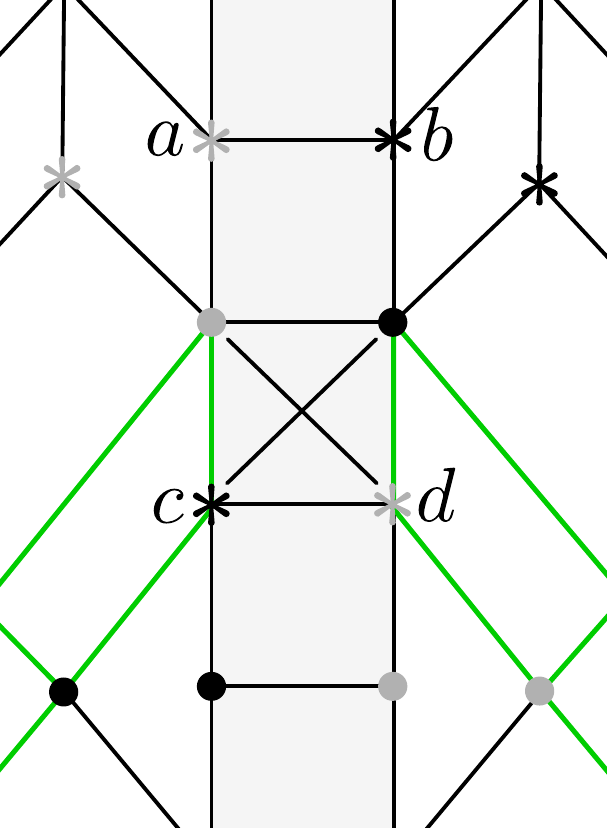}.
  \label{eq:duality_wall_move}
\end{equation}
Evaluating the left-hand side, 
\begin{equation}
\begin{aligned}
  &= 2^{-1/2}(-1)^{ac} \times 2^{-1/2}(-1)^{bd} \times \delta_{ab}\delta_{cd}\times 2^{1/2} \times 2^{1/2}\\
  &= (-1)^{ac}(-1)^{bd} \delta_{ab}\delta_{cd} = \delta_{ab}\delta_{cd}
\end{aligned}
\end{equation}
where the final factors of $\sqrt{2}$ come from the additional blank vertices on the left-hand side compared to the right-hand
side. The right-hand side of (\ref{eq:duality_wall_move}) gives the same result.
Therefore, we have shown that the duality defect can be moved along the wall without obstruction.
Finally, we must show what happens when two horizontal duality defects meet at the wall.
\begin{figure}
  \begin{center}
    \includegraphics[width=0.3\textwidth]{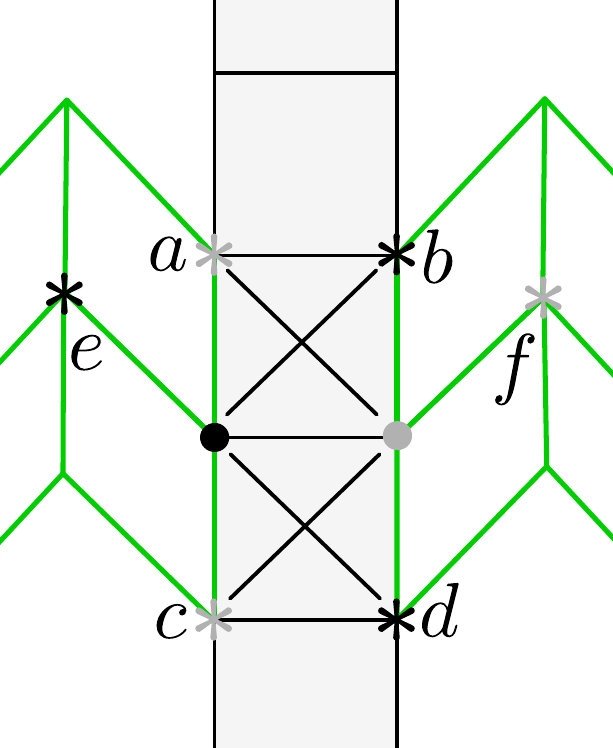}
  \end{center}
  \caption{Two horizontal duality defects meeting at the wall.}\label{fig:two_duality_wall}
\end{figure}
Consider the diagram in Figure \ref{fig:two_duality_wall}, as was shown in \cite{Aasen_2016} when two duality defects meet, it is
equivalent to the sum of an identity and a spin defect.

\begin{equation}
  \begin{split}
  \includegraphics[valign=c, width=0.17\textwidth]{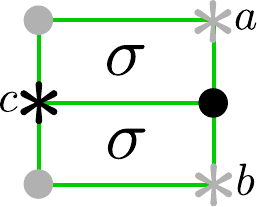}
  &= 
  \frac{1}{\sqrt{2}}\left[
  \includegraphics[valign=c, width=0.2\textwidth]{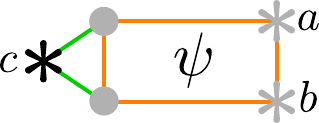}
  +
  \includegraphics[valign=c, width=0.2\textwidth]{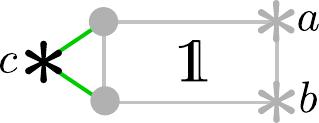}
  \right]\\
  &= 
  \frac{1}{\sqrt{2}}\left[\delta_{ab} + \sigma^x_{ab}\right]
\end{split}
  \label{eq:defect_to_sum_plaquette}
\end{equation}
Since the spins across the wall are identified, the only non-zero
contributions are those in which there is either an identity or a spin defect on both sides of Figure
\ref{fig:two_duality_wall}. Contributions where there is an identity on one side but a spin on the other vanish.

We are interested in whether we can perform the topological moves needed at the end of Section \ref{sec:top_ising} in the presence
of a wall. In this setup, a single duality defect crosses the wall multiple times, as shown in Figure \ref{fig:duality_crossing_wall}.
\begin{figure}
  \begin{center}
    \begin{equation*}
    \includegraphics[valign=c,width=0.4\textwidth]{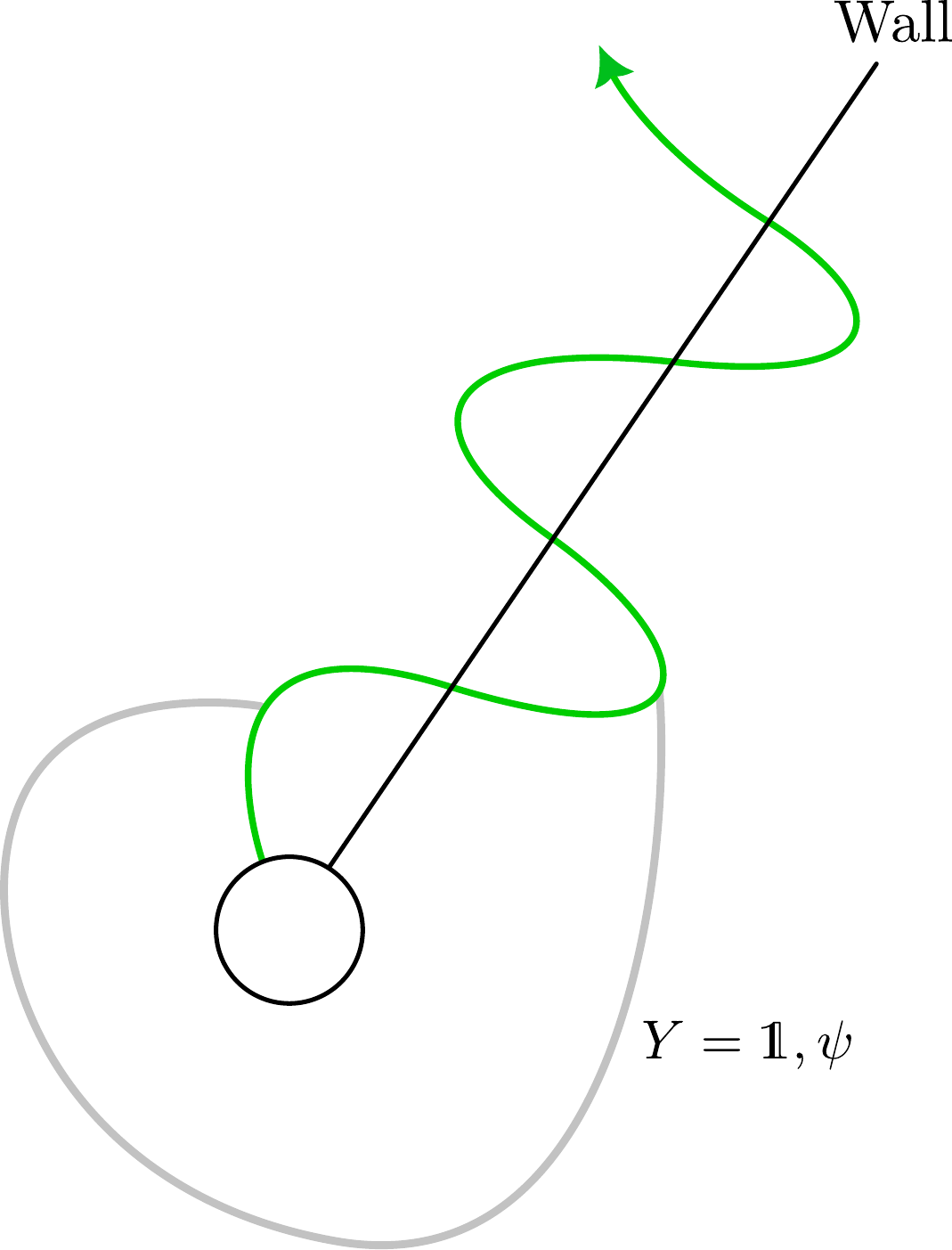}
    = 
    \sum_{X=\{\mathds{1}, \psi\}}
    \includegraphics[valign=c,width=0.4\textwidth]{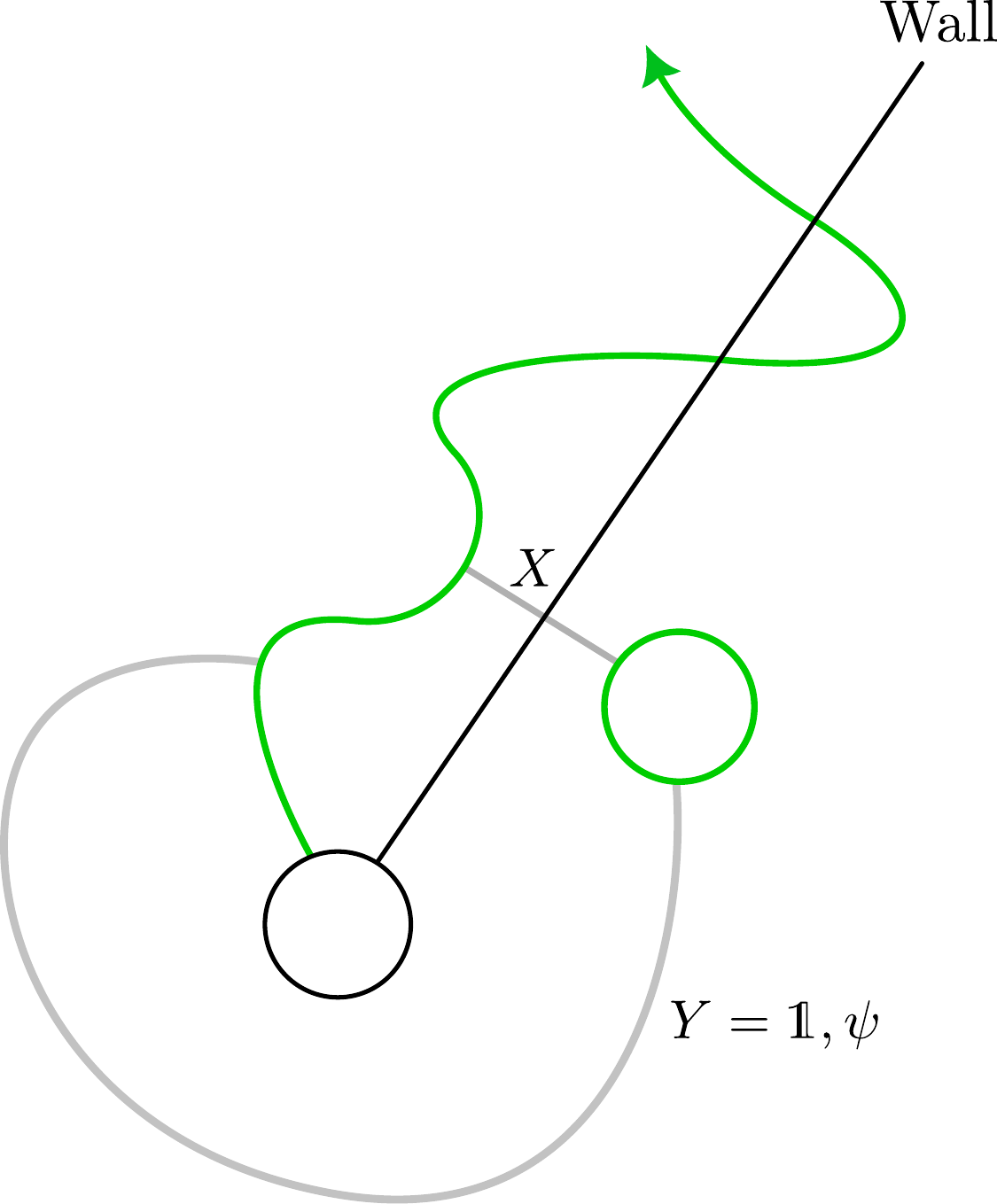}
  \end{equation*}
  \end{center}
  \caption{Macroscopic view of applying (\ref{eq:defect_to_sum_plaquette}) to the set up in Section \ref{sec:conformal_dims}. If $Y=\mathds{1}$,
  the non-zero contribution is the identity defect passing through the wall. Otherwise, if $Y=\psi$, then the spin defect passes through the wall.}\label{fig:duality_crossing_wall}
\end{figure}
As we concluded above, merging two duality lines at the wall produces a sum over an identity defect and a spin defect that
passes through the wall. Depending on the type of defect that wraps around the centre of the space, either the identity or the spin
defect will pass through the wall as shown in Figure \ref{fig:duality_crossing_wall}. In each case, the remaining duality bubble shrinks to zero, contributing a factor of $\sqrt{2}$ which cancels the factor in the sum in (\ref{eq:defect_to_sum_plaquette}).
Hence, we have shown that the duality defect can be moved through the wall, allowing us to perform the necessary moves in calculating
the conformal dimensions in Section \ref{sec:conformal_dims}.

\bibliographystyle{JHEP}
\bibliography{main.bib}
\end{document}